\definecolor{orange}{rgb}{1,0.4,0.0}
\DeclarePairedDelimiterXPP{\KL}[2]{D_\textnormal{KL}}{(}{)}{}{%
#1\:\delimsize\|\:#2%
}
\DeclarePairedDelimiterXPP{\RD}[2]{D_{\alpha}}{(}{)}{}{%
#1\:\delimsize\|\:#2%
}
\DeclarePairedDelimiterXPP\Prob[1]{\mathbb{P}}{\lbrace}{\rbrace}{}{

#1}
\DeclarePairedDelimiterXPP{\lnorm}[2]{}{\lVert}{\rVert}{_{#2}}{#1}
\newcommand{\bA}{\ensuremath{\mathbb{A}}}
\newcommand{\bE}{\ensuremath{\mathbb{E}}}
\newcommand{\bR}{\ensuremath{\mathbb{R}}}
\newcommand{\bP}{\ensuremath{\mathbb{P}}}
\newcommand{\cW}{\ensuremath{\mathcal{W}}}
\newcommand{\cX}{\ensuremath{\mathcal{X}}}
\newcommand{\cZ}{\ensuremath{\mathcal{Z}}}
\newcommand{\cN}{\ensuremath{\mathcal{N}}}
\newcommand{\stack}[2]{\stackrel{\mathclap{(#1)}}{#2}}
\newtheoremstyle{mytheoremstyle} 
    {\topsep}                    
    {\topsep}                    
    {\itshape}                   
    {}                           
    {\bf}                        
    {.}                          
    {.5em}                       
    {}  
\theoremstyle{mytheoremstyle}
\newtheorem{lemma}{Lemma}
\newtheorem{proposition}{Proposition}
\newtheorem{corollary}{Corollary}
\newtheorem{remark}{Remark}
\newcounter{MYtempeqncnt}
\title{On Random Subset Generalization Error Bounds and the Stochastic Gradient Langevin Dynamics Algorithm}
\author{%
  Borja Rodr\'iguez-G\'alvez, Germ\'an Bassi, Ragnar Thobaben, and Mikael Skoglund %
  \thanks{This work was supported in part by the Knut and Alice Wallenberg Foundation, the Swedish Foundation for Strategic Research, and the Swedish Research Council.
  } \\
  Division of Information Science and Engineering (ISE)\\
  KTH Royal Institute of Technology\\
  \texttt{\{borjarg, germanb, ragnart, skoglund\}@kth.se}
}
\date{August 2020}
\begin{document}

\maketitle

\begin{abstract}
In this work, we unify several expected generalization error bounds based on random subsets using the framework developed by~\citet{hellstrm2020generalization}.
First, we recover the bounds based on the individual sample mutual information from~\citet{bu2020tightening} and on a random subset of the dataset from~\citet{negrea2019information}.
Then, we introduce their new, analogous bounds in the randomized subsample setting from~\citet{steinke2020reasoning}, and we identify some limitations of the framework.
Finally, we extend the bounds from~\citet{haghifam2020sharpened} for Langevin dynamics to stochastic gradient Langevin dynamics and we refine them for loss functions with potentially large gradient norms.
\end{abstract}

\section{Introduction}

A learning algorithm $\bA$ is a mechanism that takes as an input a sequence $S = (Z_1, \ldots, Z_N)$ of $N$ i.i.d. samples $Z_i \in \cZ$ from $P_Z$, or a \emph{dataset}, and outputs a \emph{hypothesis} $W \in \cW$ by means of the conditional probability distribution $P_{W|S}$. 


We measure how well a hypothesis $W$ describes a sample $Z$ using a loss function $\ell: \cW \times \cZ \rightarrow \bR^+$.
Hence, a hypothesis describes the samples from a population $P_Z$ well when its \emph{population risk}, i.e., 
$
	L_{P_Z}(W) \triangleq \bE_{P_Z}[\ell(W,Z)],
$
is low.
However, $P_Z$ is often not available and we consider instead the \emph{empirical risk} on the dataset $S$, i.e.,
$
	L_{S}(W) \triangleq \frac{1}{N} \sum_{i=1}^N \ell(W,Z_i),
$
as a proxy.
Therefore, it is of interest to study the discrepancy between the population and empirical risks, which is defined as the \emph{generalization error}:
\begin{equation*}
	\textnormal{gen}(W,S) \triangleq  L_{P_Z}(W) - L_S(W).
\end{equation*}
By characterizing the generalization error of learning algorithms, we can intuit how far the empirical risk is from the real population risk.
Classical approaches have bounded the generalization error in expectation and in probability (PAC Bayes) either by measuring the complexity of the hypothesis space $\cW$ or by exploring properties of the learning algorithm itself (see~\citep{shalev2014understanding} for an overview of traditional approaches).

More recently, \citet{xu2017information}, based on~\citep{russo2019much}, found that the {expected generalization error} $\bE_{P_{W,S}} \big[ \textnormal{gen}(W,S) \big]$ is bounded from above by a function that depends on the mutual information between the hypothesis $W$ and the dataset $S$ with which it is trained, i.e.,  $I(W;S)$.
Similarly, \citet{bu2020tightening} and \citet{negrea2019information} found that it is also bounded from above by a function that depends on the dependency between the hypothesis and an individual sample, $I(W;Z_i)$, and on the hypothesis and a subset $S_J$ of the dataset $\KL[\big]{P_{W|S}}{P_{\smash{W|S_{J^c}}}}$.

After that, \citet{steinke2020reasoning} introduced a more structured setting.
They consider a super-sample of $2N$ i.i.d. samples $\tilde{Z}_i$ from $P_Z$, i.e., $\tilde{S} = (\tilde{Z}_1, \ldots, \tilde{Z}_{2N})$.
This super-sample is then used to construct the dataset $S$ by choosing between the samples $\tilde{Z}_i$ and $\tilde{Z}_{i+N}$ using a Bernoulli random variable $U_i$ with probability $1/2$; i.e., $Z_i = \tilde{Z}_{i + U_i N}$.
In this paper, we will distinguish between these two settings as the \emph{standard setting} and the \emph{randomized subsample setting}.\footnote{Note that in~\citep{hellstrm2020generalization}, the latter is called the random-subset setting. However, this may cause confusion with the random subset bounds in the present work.}

In the randomized subsample setting, we can define the \emph{empirical generalization error} as the difference between the empirical risk on the samples from $\tilde{S}$ not used to obtain the hypothesis $W$ from the algorithm, i.e., $\bar{S} = \tilde{S} \setminus S$, and the empirical risk on the dataset $S$; i.e., \[
    \widehat{\textnormal{gen}}(W,\tilde{S},U) \triangleq 
    \frac{1}{N} \! \sum_{i=1}^N \! \Big( \ell \big( W, \tilde{Z}_{i+(1-U_i)N} \big) \!- \ell \big( W,\tilde{Z}_{i+U_iN} \big) \! \Big).
\]
Then, we may note that the expected value of the empirical and the (standard) generalization errors coincide; i.e., $\bE_{P_{\smash{W,\tilde{S},U}}} \big[ \widehat{\textnormal{gen}}(W,\tilde{S},U) \big] = \bE_{P_{W,S}} \big[ \textnormal{gen}(W,S) \big]$, where $U$ is the sequence of $N$ i.i.d. Bernoulli trials $U_i$.
Moreover, it was also shown that the expected generalization error is bounded from above by a function of the conditional mutual information between the hypothesis $W$ and the Bernoulli trials $U$, given the super-sample $\tilde{S}$~\citep{steinke2020reasoning}, i.e., $I(W;U|\tilde{S})$, and by a subset $U_J$ of the Bernoulli trials~\citep{haghifam2020sharpened}, i.e., $\KL[\big]{P_{\smash{W|U,\tilde{S}}}}{P_{\smash{W|U_{J^c},\tilde{S}}}}$.

A first step towards unifying these results was given by
\citet{hellstrm2020generalization}, who introduced a framework that allowed them to recover the mutual information $I(W;S)$ and conditional mutual information $I(W;U|\tilde{S})$ expected generalization error bounds, among other PAC-Bayesian bounds. In this work, we show that the aforementioned framework
can also be adapted to obtain bounds based on a random subset of the dataset.
In particular:
\begin{enumerate}[label=(\roman*)]
\item We recover the bounds based on the individual sample mutual information~\citep[Proposition 1]{bu2020tightening} and on a random subset $S_J$ of the dataset~\citep[Theorem 2.4]{negrea2019information}.
These results are presented in Propositions~\ref{prop:bu_p1_recovery} and~\ref{prop:negrea_t24_recovery}.
\item We obtain new bounds based on the ``individual sample'' conditional mutual information $I(W; U_i|\tilde{Z}_i, \tilde{Z}_{i+N})$ and on a random subset $U_J$ of the Bernoulli trials.
These results are presented in Propositions~\ref{prop:icmi} and~\ref{prop:cond_version_negrea_t24}.
\item We further show that tighter bounds of the type of~\citep[Theorem 2.5]{negrea2019information} and~\citep[Theorem 3.7]{haghifam2020sharpened} cannot be recovered with this framework.
\end{enumerate}

Secondly, we develop expected generalization error bounds for the stochastic gradient Langevin dynamics (SGLD) algorithm. More specifically,
\begin{enumerate}[label=(\roman*)]
\setcounter{enumi}{3}
\item We extend~\citep[Theorem 3.7]{haghifam2020sharpened} in Proposition~\ref{prop:ext_haghifam_t43} in order to generalize the bound from~\citep[Theorem 4.2]{haghifam2020sharpened} for Langevin dynamics to SGLD in Proposition~\ref{prop:mini-batch-ld}, and to obtain the new bound in Proposition~\ref{prop:mini-batch-ld-mixture} using~\citep[Lemma 2]{rodriguez2020upper}.
Then, we combine Propositions~\ref{prop:mini-batch-ld} and~\ref{prop:mini-batch-ld-mixture} to generate our tightest bound on SGLD in Corollary~\ref{cor:mini-batch-ld}.
\end{enumerate}
\label{sec:introduction}

\section{Preliminaries}
\subsubsection{Notation}
Throughout this work, we write random variables in capital letters, $X$, and their realizations in lowercase letters, $x$.
We further write their target spaces in calligraphic letters, $\cX$, and the sigma algebras of their target space in script-style letters, $\mathscr{X}$.
We consider that all random variables are functions $X: \Omega \rightarrow \cX$ from an abstract probability space $(\Omega, \mathcal{A}, \bP)$ to a target space $(\cX, \mathscr{X})$.
Then, we denote their probability distribution by $P_X: \mathscr{X} \rightarrow [0,1]$, where $P_X(B) = \bP (X \in B)$ for all $B \in \mathscr{X}$.

When we consider more than one random variable, e.g., $X$ and $Y$, we write their 
joint probability distribution as $P_{X,Y}: \mathscr{X} \otimes \mathscr{Y} \rightarrow [0,1]$ and their product distribution as $P_{X} \times P_{Y}: \mathscr{X} \otimes \mathscr{Y} \rightarrow [0,1]$. 
Moreover, we also write the conditional distribution of $Y$ given $X$ as $P_{Y|X} : \mathscr{Y} \times \cX \rightarrow [0,1]$, which defines a probability distribution $P_{Y|X=x}$ over $(\mathcal{Y},\mathscr{Y})$ for each element $x \in \cX$.
Finally, we abuse notation and write $P_{Y|X} \times P_X = P_{X,Y}$ since $P_{X,Y}(B) =  \int \big( \int \chi_B \big((x,y) \big) dP_{Y|X=x}(y) \big) dP_X(x)$ for all $B \in \mathscr{X} \otimes \mathscr{Y}$, where $\chi_B$ is the characteristic function of the set $B$.

The relative entropy between two probability distributions $P$ and $Q$ is defined as
\[
	\KL{P}{Q} \triangleq \bE_P \left[\log \frac{dP}{dQ} \right],
\]
where $dP/dQ$ is the Radon--Nikodym derivative.
When one of the two distributions, $P$ or $Q$, depends on a random variable $R$, e.g., $P(R) = P_{X|R}$, then the relative entropy between the said distributions is also dependent on the random variable $R$.
For instance, in this example $\KL{P(R)}{Q} = f(R)$.
To say that $P$ is absolutely continuous w.r.t. $Q$ we write $P \ll Q$.

\vspace{1mm}
\subsubsection{A helpful Lemma}

We abstract and summarize the results from~\citep[Theorem 1 and Corollary 1, and Theorem 4 and Corollary 5]{hellstrm2020generalization} in the following Lemma.

\begin{lemma} 
\label{lemma:abstraction_hellstrom}
Let $f: \cX \times \mathcal{Y} \rightarrow \bR$ either:
\begin{enumerate}[label=(\roman*)]
	\item have zero mean and be $\sigma$-subgaussian under $P_X$ for all $y \in \mathcal{Y}$, or
	\item have zero mean and be $\sigma$-subgaussian under $P_X \times P_Y$.
\end{enumerate}  
Then, for all distributions $Q_Y$ over $(\mathcal{Y},\mathscr{Y})$,
\begin{equation}
	\big| \bE_{P_{X,Y}}[f(X,Y)] \big| \leq \sqrt{2\sigma^2 \KL{P_{X,Y}}{Q_Y \times P_X}}.
	\label{eq:abstraction_hellstrom}
\end{equation}
\end{lemma}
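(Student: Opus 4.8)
The plan is to combine a change-of-measure (Donsker--Varadhan) inequality with subgaussian moment control and then optimize over a free tilt parameter. First I would invoke the Gibbs variational inequality: for any two measures $P \ll Q$ and any measurable $g$ with $\bE_Q[e^g] < \infty$, the nonnegativity of the relative entropy $\KL{P}{\tilde{Q}}$ between $P$ and the exponentially tilted measure $d\tilde{Q} \propto e^{g}\, dQ$ rearranges to $\bE_P[g] \le \KL{P}{Q} + \log \bE_Q[e^{g}]$. Applying this with $P = P_{X,Y}$, $Q = Q_Y \times P_X$, and $g = \lambda f$ for a free parameter $\lambda \in \bR$ yields
\begin{equation*}
\lambda \, \bE_{P_{X,Y}}[f(X,Y)] \le \KL{P_{X,Y}}{Q_Y \times P_X} + \log \bE_{Q_Y \times P_X}\big[e^{\lambda f(X,Y)}\big].
\end{equation*}

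Next I would bound the log-moment-generating-function term using the subgaussian hypothesis, and this is exactly where the two cases separate and where I expect the main obstacle to lie: the reference measure $Q_Y \times P_X$ appearing in the change-of-measure step must be matched to the measure under which subgaussianity is assumed. Under assumption (i), for each fixed $y$ the zero-mean $\sigma$-subgaussianity of $f(\cdot,y)$ under $P_X$ gives $\bE_{P_X}\big[e^{\lambda f(X,y)}\big] \le e^{\lambda^2 \sigma^2/2}$; since this holds pointwise in $y$, integrating against an arbitrary $Q_Y$ preserves it and produces $\log \bE_{Q_Y \times P_X}[e^{\lambda f}] \le \lambda^2 \sigma^2/2$ for every $Q_Y$. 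Under assumption (ii), the subgaussianity is available only under the product $P_X \times P_Y$, so the same estimate holds when $Q_Y = P_Y$, in which case $\KL{P_{X,Y}}{P_Y \times P_X} = I(X;Y)$; the delicacy is that (ii) does not license an arbitrary $Q_Y$, whereas (i) does, and keeping track of which hypothesis controls which family of $Q_Y$ is the crux. In either admissible situation I obtain
\begin{equation*}
\lambda \, \bE_{P_{X,Y}}[f(X,Y)] \le \KL{P_{X,Y}}{Q_Y \times P_X} + \frac{\lambda^2 \sigma^2}{2}.
\end{equation*}

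Finally I would optimize over the tilt. For $\lambda > 0$, dividing by $\lambda$ gives $\bE_{P_{X,Y}}[f] \le \KL{P_{X,Y}}{Q_Y \times P_X}/\lambda + \lambda \sigma^2/2$, and minimizing the right-hand side at $\lambda^\star = \sqrt{2\,\KL{P_{X,Y}}{Q_Y \times P_X}/\sigma^2}$ yields the upper bound $\bE_{P_{X,Y}}[f] \le \sqrt{2\sigma^2 \KL{P_{X,Y}}{Q_Y \times P_X}}$. Repeating the argument for $\lambda < 0$, or equivalently applying the whole derivation to $-f$ (which is zero-mean and $\sigma$-subgaussian under the same measure), produces the matching lower bound, and the two together give the absolute value in~\eqref{eq:abstraction_hellstrom}. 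A minor secondary point to check is the integrability and absolute-continuity needed to justify Donsker--Varadhan, together with the availability of the subgaussian inequality for all real $\lambda$ so that both signs of the optimization are legitimate; the zero-mean hypothesis is what removes the linear term and keeps the moment-generating-function estimate clean.
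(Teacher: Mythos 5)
Your strategy---change of measure, subgaussian control of the log-moment-generating function, then optimization over $\lambda$---is essentially the paper's own proof: the paper runs the same computation via the information density $\iota(X;Y)$, a restriction to the support of $P_{X,Y}$, and Jensen's inequality, rather than citing Donsker--Varadhan by name. However, as written your proof has a genuine gap under assumption (ii), one which you flag yourself (``the delicacy is that (ii) does not license an arbitrary $Q_Y$'') but never close. Your Donsker--Varadhan step needs the estimate $\log \bE_{Q_Y \times P_X}[e^{\lambda f}] \le \lambda^2\sigma^2/2$ under the \emph{reference} measure $Q_Y \times P_X$, and assumption (ii) delivers this only for the single choice $Q_Y = P_Y$. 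Consequently, under (ii) you have established the inequality only for $Q_Y = P_Y$, i.e.\ the mutual-information bound, whereas the lemma asserts it for \emph{every} distribution $Q_Y$ over $(\cY,\mathscr{Y})$; this ``for all $Q_Y$'' is not cosmetic, since the paper's later propositions instantiate the lemma with data-dependent priors $Q$ that differ from the true marginal.

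The missing step is a one-line monotonicity argument, and it is precisely how the paper concludes (its appeal to~\citep[Corollary 3.1]{polyanskiy2014lecture}): whenever the quantity $\KL{P_{X,Y}}{Q_Y \times P_X}$ is finite, it admits the decomposition
\begin{equation*}
	\KL{P_{X,Y}}{Q_Y \times P_X} = \KL{P_{X,Y}}{P_Y \times P_X} + \KL{P_Y}{Q_Y} \geq \KL{P_{X,Y}}{P_Y \times P_X},
\end{equation*}
and when it is infinite the claimed bound is vacuous. Hence the inequality you proved with $\KL{P_{X,Y}}{P_Y \times P_X} = I(X;Y)$ on the right-hand side immediately implies the inequality for arbitrary $Q_Y$, which completes case (ii). This is also how the paper treats both cases uniformly: it first reduces (i) to the same exponential-moment inequality under $P_Y \times P_X$ that (ii) postulates, proves the bound with $P_Y$, and only at the very end substitutes an arbitrary $Q_Y$. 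Your handling of case (i) is complete as written---and in fact slightly more direct than the paper's, since pointwise-in-$y$ subgaussianity licenses an arbitrary $Q_Y$ from the outset---so appending the displayed monotonicity step to case (ii) is all that is needed.
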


\label{sec:preliminaries}

\section{Subset Bounds}
In this section, we use Lemma~\ref{lemma:abstraction_hellstrom} to derive several expected generalization error bounds based on random subsets. We recover known bounds
for the standard setting and we also obtain their randomized subsample setting counterparts.

\subsection{Achievable subset bounds}

In the standard setting, the bounds using the individual sample mutual information, $I(W; Z_i)$, from~\citet[Proposition 1]{bu2020tightening} and the random subset from~\citet[Theorem 2.4]{negrea2019information} can be recovered using the framework from~\citep{hellstrm2020generalization} taking advantage of the subgaussianity of the empirical risk $L_{S_J}(W)$, of a subset $S_J$ of the dataset, under $P_{S_J}$. 

\begin{proposition}[{\cite[Proposition 1]{bu2020tightening}}]
\label{prop:bu_p1_recovery}
Under the assumptions of the standard setting, if $P_{W,Z_i} \ll P_{W} \times P_{Z}$ for all $i$ and either
\begin{enumerate}[label=(\roman*)]
    \item $\ell(w,Z)$ is $\sigma$-subgaussian under $P_Z$ for all $w \in \cW$ or 
    \item $\ell(W,Z)$ is $\sigma$-subgaussian under $P_{W} \times P_{Z}$,
\end{enumerate}
then the expected generalization error is bounded as follows:
\begin{equation}
    \big| \bE_{P_{W,S}} \big[ \textnormal{gen}(W,S) \big] \big| \leq \frac{1}{N} \sum_{i=1}^N \sqrt{2\sigma^2 I(W;Z_i)}.
\end{equation}
\end{proposition}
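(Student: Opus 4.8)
The plan is to apply Lemma~\ref{lemma:abstraction_hellstrom} separately to each sample index $i$ and then recombine the per-sample bounds via the triangle inequality. First I would split the expected generalization error into an average of $N$ per-sample discrepancies. Since $L_{P_Z}(W)=\bE_{P_Z}[\ell(W,Z)]$ depends only on $W$ and $P_{Z_i}=P_Z$ for every $i$, marginalizing the joint $P_{W,S}$ down to each pair $(W,Z_i)$ yields
\begin{equation*}
    \bE_{P_{W,S}} \big[ \textnormal{gen}(W,S) \big] = \frac{1}{N} \sum_{i=1}^N \Big( \bE_{P_W \times P_{Z_i}}[\ell(W,Z_i)] - \bE_{P_{W,Z_i}}[\ell(W,Z_i)] \Big).
\end{equation*}
Each summand is exactly the type of quantity Lemma~\ref{lemma:abstraction_hellstrom} controls once we identify $X=Z_i$, $Y=W$, and choose the free distribution $Q_Y=P_W$, so that the divergence term collapses to $\KL{P_{W,Z_i}}{P_W \times P_{Z_i}}=I(W;Z_i)$. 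The hypothesis $P_{W,Z_i} \ll P_W \times P_Z$ is what guarantees this mutual information is well-defined.

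The next step is to pick the test function $f$ so that $\bE_{P_{X,Y}}[f(X,Y)]$ reproduces the $i$-th summand while meeting both the zero-mean and the subgaussianity requirements; the key subtlety is that the two subgaussianity hypotheses of the proposition call for two \emph{different} centerings of the loss. Under hypothesis (i) I would take $f(Z_i,W)=L_{P_Z}(W)-\ell(W,Z_i)$: for each fixed $w$ this is $\ell(w,\cdot)$ under an affine, sign-flipping transformation whose shift $L_{P_Z}(w)$ is constant in $Z_i$, so it is $\sigma$-subgaussian under $P_{Z_i}$ and has zero mean there, matching case (i) of the lemma. Under hypothesis (ii) I would instead center by the constant $c \triangleq \bE_{P_W \times P_Z}[\ell(W,Z)]$ and take $f(Z_i,W)=c-\ell(W,Z_i)$, which is zero-mean and $\sigma$-subgaussian under the product $P_{Z_i} \times P_W$ directly by assumption. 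In either case one checks that $\bE_{P_{Z_i,W}}[f(Z_i,W)]$ equals the $i$-th summand displayed above.

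The main obstacle is precisely this bookkeeping of which centering makes $f$ subgaussian. Centering by $L_{P_Z}(W)$ is exactly what produces a per-$w$ zero mean for case (i), but that same choice blends a $W$-dependent term with $\ell(W,Z_i)$ and need not remain $\sigma$-subgaussian under the product measure; that is what forces the constant centering in case (ii). Once the correct $f$ is in place for each hypothesis, Lemma~\ref{lemma:abstraction_hellstrom} delivers
\begin{equation*}
    \Big| \bE_{P_{W,Z_i}}[\ell(W,Z_i)] - \bE_{P_W \times P_{Z_i}}[\ell(W,Z_i)] \Big| \leq \sqrt{2\sigma^2 I(W;Z_i)}
\end{equation*}
for every $i$. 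Finally I would apply the triangle inequality to the $\tfrac{1}{N}\sum_i$ decomposition and pull the $1/N$ outside the sum, which gives the claimed bound $\tfrac{1}{N}\sum_{i=1}^N \sqrt{2\sigma^2 I(W;Z_i)}$.
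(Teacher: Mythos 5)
Your proposal is correct and follows essentially the same route as the paper: decompose the expected generalization error into per-sample discrepancies, apply Lemma~\ref{lemma:abstraction_hellstrom} with $X=Z_i$, $Y=W$, and $Q_Y=P_W$ so that the divergence term becomes $I(W;Z_i)$, and finish with the triangle inequality. The only place you deviate is the handling of hypothesis (ii): the paper uses the single function $\textnormal{gen}_i(W,Z_i)=L_{P_Z}(W)-\ell(W,Z_i)$ in both cases and simply asserts that it is zero-mean and $\sigma$-subgaussian under $P_W\times P_{Z_i}$ under hypothesis (ii), whereas you re-center by the constant $c=\bE_{P_W\times P_Z}[\ell(W,Z)]$ and take $f=c-\ell(W,Z_i)$. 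Your extra care is warranted: $\sigma$-subgaussianity of $\ell(W,Z)$ under the product measure does not in general transfer, with the same constant $\sigma$, to $L_{P_Z}(W)-\ell(W,Z_i)$, because the shift $L_{P_Z}(W)$ is itself a random variable; the constant centering sidesteps this, and it changes nothing downstream since $\bE_{P_{W,Z_i}}[L_{P_Z}(W)]=c$, so both choices of $f$ have the same expectation under $P_{W,Z_i}$. In this one respect your write-up is slightly more rigorous than the paper's own proof (and it matches the original argument of Bu et al., which also centers by the product-measure mean).
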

\begin{proof}[Sketch of the alternative proof]
In the setting of Lemma~\ref{lemma:abstraction_hellstrom}, let $f(X,Y) = \textnormal{gen}_i(W, Z_i) \triangleq \bE_{P_Z}[\ell(W,Z)] - \ell(W,Z_i)$, $P_X = P_{Z} = P_{Z_i}$, and $P_Y = Q_Y = P_{W}$. Then, we have that
\begin{equation*}
	\left| \bE_{P_{W,\smash{Z_i}}} \big[ \textnormal{gen}_i(W,Z_i) \big] \right| \leq \sqrt{2\sigma^2 I(W;Z_i)}.
\end{equation*}
Finally, we note that $\bE_{P_{W,S}}[\ell(W,Z_i)] = \bE_{P_{W,Z_i}}[\ell(W,Z_i)]$ and we use the triangle inequality to obtain the desired result.
\end{proof}

\begin{proposition}[{Extension of \citep[Theorem 2.4]{negrea2019information}}]
\label{prop:negrea_t24_recovery}
Consider the assumptions of the standard setting.
Also consider a random subset $J \subseteq [N]$ such that $|J| = M$, which is uniformly distributed and independent of $W$ and $S$, and a random variable $R$ which is independent of $S$ and $J$.
If $P_{W,S_J|S_{J^c},R} \ll Q_{W|S_{J^c},R} \times P_{S_J}$ for all $J$,\footnote{We note that $Q_{W|S_{J^c},R}$ might differ from $P_{W|S_{J^c},R}$, the marginal of $P_{W,S_J|S_{J^c},R}$ with respect to $W$.} and either:
\begin{enumerate}[label=(\roman*)]
    \item $\ell(w,Z)$ is $\sigma$-subgaussian under $P_Z$ for all $w \in \cW$ or 
    \item $\ell(W,Z)$ is $\sigma$-subgaussian under $P_{W|S_{J^c},R} \times P_Z$,
\end{enumerate}
then the expected generalization error is bounded as follows:
\begin{align}
\MoveEqLeft[.8]
 \big| \bE_{P_{W,S}} \big[ \textnormal{gen}(W,S) \big] \big| \nonumber \\
 &\leq \bE \Bigg[ \sqrt{\frac{2\sigma^2}{M} \KL[\big]{P_{W,S_J |S_{J^c},R }}{Q_{W|S_{J^c},R } \times P_{S_J}}} \Bigg],
 \label{eq:negrea_t24_recovery}
\end{align}
where $Q_{W|S_{J^c},R}$ is a $(\mathscr{Z}^{\otimes (N - M)} \otimes \mathscr{R} \otimes \mathscr{J})$-measurable distribution over $(\cW,\mathscr{W})$, and the expectation on the r.h.s. of~\eqref{eq:negrea_t24_recovery} is over $P_{ J,S_{J^c},R }$.
\end{proposition}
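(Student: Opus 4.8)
The plan is to reduce the claim to a single conditional application of Lemma~\ref{lemma:abstraction_hellstrom}, after replacing the full empirical risk by the empirical risk on the random subset $S_J$. First I would introduce the \emph{subset generalization error} $\textnormal{gen}_J(W,S_J) \triangleq L_{P_Z}(W) - L_{S_J}(W)$, where $L_{S_J}(W) = \frac{1}{M}\sum_{i\in J}\ell(W,Z_i)$. Because $J$ is uniformly distributed over the $M$-subsets of $[N]$ and independent of $(W,S)$, each index $i$ lies in $J$ with probability $M/N$, so averaging $L_{S_J}(W)$ over $J$ recovers $L_S(W)$ in expectation; consequently $\bE_{P_{W,S,J}}[\textnormal{gen}_J(W,S_J)] = \bE_{P_{W,S}}[\textnormal{gen}(W,S)]$, and the auxiliary variable $R$, being independent of $S$ and $J$, plays no role in this identity. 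This reduces the target to bounding $\bE[\textnormal{gen}_J]$.

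Next I would condition on $(J,S_{J^c},R)$ and invoke Lemma~\ref{lemma:abstraction_hellstrom} with $X = S_J$, $Y = W$, and $f = \textnormal{gen}_J$. The structural fact that makes the reference measure come out right is that, since $S$ is i.i.d. and $R,J$ are independent of $S$, the conditional law $P_{S_J|J,S_{J^c},R}$ equals the marginal $P_{S_J}$ (an $M$-fold product of $P_Z$); hence the $X$-marginal demanded by the lemma is exactly $P_{S_J}$, and choosing $Q_Y = Q_{W|S_{J^c},R}$ makes the reference product measure $Q_{W|S_{J^c},R}\times P_{S_J}$. The absolute-continuity hypothesis guarantees the relative entropy is finite, and the lemma then yields, for each realization of $(J,S_{J^c},R)$,
\[
  \big|\bE_{P_{W,S_J|S_{J^c},R}}[\textnormal{gen}_J(W,S_J)]\big| \leq \sqrt{\tfrac{2\sigma^2}{M}\,\KL[\big]{P_{W,S_J|S_{J^c},R}}{Q_{W|S_{J^c},R}\times P_{S_J}}}.
\]

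The step that needs the most care is verifying the subgaussianity hypothesis of Lemma~\ref{lemma:abstraction_hellstrom} with the \emph{improved} parameter $\sigma/\sqrt{M}$, which is precisely what produces the $1/M$ inside the square root. Under assumption (i) this is immediate: for each fixed $w$, the quantity $\textnormal{gen}_J(w,S_J)$ is an average of $M$ independent, zero-mean, $\sigma$-subgaussian terms $\bE_{P_Z}[\ell(w,Z)] - \ell(w,Z_i)$, hence $\sigma/\sqrt{M}$-subgaussian under $P_{S_J}$, matching case (i) of the lemma. Under assumption (ii) one must instead show that the centered subset average is $\sigma/\sqrt{M}$-subgaussian under the product $P_{W|S_{J^c},R}\times P_{S_J}$; here the shared dependence on $W$ couples the $M$ summands, so the bound does not follow from the single-sample subgaussianity by a one-line argument. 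Establishing it, by manipulating the moment generating function of the subset average and using the conditional independence of the $Z_i$ given $W$, is the main technical obstacle, and I expect this to be the only place where assumptions (i) and (ii) must be treated genuinely separately.

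Finally, I would take the expectation of the conditional bound over $P_{J,S_{J^c},R}$ and apply the triangle inequality (Jensen's inequality for the convex map $|\cdot|$), moving $|\cdot|$ inside the expectation, to pass from $\big|\bE_{P_{W,S}}[\textnormal{gen}(W,S)]\big| = \big|\bE_{P_{J,S_{J^c},R}}[\,\bE_{P_{W,S_J|\cdots}}[\textnormal{gen}_J]\,]\big|$ to the claimed bound~\eqref{eq:negrea_t24_recovery}. The $(\mathscr{Z}^{\otimes(N-M)}\otimes\mathscr{R}\otimes\mathscr{J})$-measurability of $Q_{W|S_{J^c},R}$ stated in the proposition is exactly what is needed for the outer expectation and the relative-entropy term to be well defined.
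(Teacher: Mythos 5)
Your proposal is correct and follows essentially the same route as the paper's proof: the same subset error $\textnormal{gen}_J(W,S_J)$, the same conditional application of Lemma~\ref{lemma:abstraction_hellstrom} with $X=S_J$, $Y=(W|S_{J^c},R)$ and reference measure $Q_{W|S_{J^c},R}\times P_{S_J}$, the same $\sigma/\sqrt{M}$-subgaussianity argument for case (i), and the same final combination of Jensen's inequality with the identity $\bE_{P_{W,S,R,J}}[\textnormal{gen}_J(W,S_J)]=\bE_{P_{W,S}}[\textnormal{gen}(W,S)]$. The one step you flag as an unresolved obstacle---that under assumption (ii) the coupling of the $M$ summands through $W$ prevents a one-line derivation of $\sigma/\sqrt{M}$-subgaussianity of $\textnormal{gen}_J$ under $P_{W|S_{J^c},R}\times P_{S_J}$---is precisely the step the paper's proof asserts without further justification, so your caution there identifies a real subtlety rather than a deviation from the paper's argument.
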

\begin{proof}[Sketch of the alternative proof]
In the setting of Lemma~\ref{lemma:abstraction_hellstrom} and assuming a fixed $J$, let $f(X,Y) = \textnormal{gen}_J(W, S_J) \triangleq \bE_{P_Z}[\ell(W, Z)] - \frac{1}{M} \sum_{i \in J} \ell(W,Z_i)$, $P_X = P_{S_J}$, and $P_Y = P_{W|S_{J^c},R}$. Then, since $\textnormal{gen}_J$ is $\sigma/\sqrt{M}$-subgaussian and zero-mean under $P_{S_J}$, we have that,
\begin{multline*}
 \big| \bE_{P_{W,S_J|S_{J^c},R}} \big[ \textnormal{gen}_J(W,S_J) \big] \big| \\
 \leq \sqrt{\frac{2\sigma^2}{M} \KL[\big]{P_{W,S_J|S_{J^c},R}}{Q_{W|S_{J^c},R} \times P_{S_J}}}.
\end{multline*}
Then, by Jensen's inequality, $| \bE_{P_{W,S,R,J}} [ \textnormal{gen}_J(W,S_J) ] | \leq \bE_{P_{J,S_{J^c},R}} [ | \bE_{P_{W,S_J|S_{J^c},R }} [ \textnormal{gen}_J(W,S_J) ] | ]$, which
completes the proof of the Proposition.
%
\end{proof}

\begin{remark}
Note that in \citep[Theorem 2.4]{negrea2019information} the result is formulated in terms of the complementary random subset $J \leftarrow J^c$; we use this other formulation instead to maintain the notation throughout the paper.
Moreover, we prove that~\eqref{eq:negrea_t24_recovery} also holds for condition (ii), which is not implied by condition (i)~\citep[Appendix C]{negrea2019information}, and also avoid forfeiting the absolute value.
\end{remark}

In the sequel, we find a new bound based on the individual sample conditional mutual information $I(W;U_i|\tilde{Z}_i,\tilde{Z}_{i+N})$, which is the tightest mutual-information--based bound in this setting (see Appendix~\ref{app:individual_cmi_tightest}
), and on random subsets of the data.
For this task, we must take advantage of the subgaussianity of the difference of the empirical risks $L_{\smash{\bar{S}}_J}(W) - L_{S_J}(W)$ of subsets (indexed by $J$) of the dataset $S$ and the rest of the super-sample $\bar{S} = \tilde{S} \setminus S$, under $P_{U_J}$.

\begin{proposition}[Individual conditional mutual information bound]
\label{prop:icmi}
Under the assumptions of the randomized subsample setting, if $P_{\smash{W,\tilde{Z}_i,\tilde{Z}_{i+N},U_i}} \ll P_{\smash{W,\tilde{Z}_i,\tilde{Z}_{i+N}}} \times P_{U_i}$ for all $i$ and $\ell(w,z)$ is bounded in $[a,b]$ for all $w \in \cW$ and $z \in \cZ$,
then the expected generalization error is bounded as follows:
\begin{multline}
    \left| \bE_{P_{W,S}} \big[ \textnormal{gen}(W,S) \big] \right| \\
    \leq \frac{1}{N} \sum_{i=1}^N \sqrt{2(b-a)^2 I(W;U_i|\tilde{Z}_i, \tilde{Z}_{i+N})}.
    \label{eq:icmi_result}
\end{multline}
\end{proposition}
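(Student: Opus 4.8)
The plan is to adapt the proof of Proposition~\ref{prop:bu_p1_recovery} to the randomized subsample setting, working one index $i$ at a time and conditioning on the super-sample pair $(\tilde{Z}_i, \tilde{Z}_{i+N})$ so that the only randomness left for the complexity term to control is the Bernoulli selector $U_i$. First I would split the empirical generalization error into its per-index contributions,
\[
    \widehat{\textnormal{gen}}_i(W, \tilde{Z}_i, \tilde{Z}_{i+N}, U_i) \triangleq \ell\big(W, \tilde{Z}_{i+(1-U_i)N}\big) - \ell\big(W, \tilde{Z}_{i+U_iN}\big),
\]
so that $\widehat{\textnormal{gen}}(W,\tilde{S},U) = \frac{1}{N}\sum_{i=1}^N \widehat{\textnormal{gen}}_i$, and recall the identity from the Introduction, $\bE_{P_{W,S}}[\textnormal{gen}(W,S)] = \frac{1}{N}\sum_{i=1}^N \bE[\widehat{\textnormal{gen}}_i]$, where each summand depends only on $(W,\tilde{Z}_i,\tilde{Z}_{i+N},U_i)$.

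The core step is to verify the hypotheses of Lemma~\ref{lemma:abstraction_hellstrom} with $X = U_i$, $Y = W$, and $P_X = P_{U_i}$, after fixing a realization $(\tilde{z}_i,\tilde{z}_{i+N})$. Writing $g_i \triangleq \ell(W,\tilde{z}_i) - \ell(W,\tilde{z}_{i+N})$, one sees that $\widehat{\textnormal{gen}}_i = (2U_i - 1)\,g_i$, i.e.\ a symmetric two-point random variable on $\{-g_i, +g_i\}$ as $U_i$ ranges over its two equiprobable values. This is zero-mean under $P_{U_i}$ for every fixed $w$, and by the bounded-loss hypothesis $|g_i| \le b - a$, so it is supported on an interval of length at most $2(b-a)$ and is therefore $(b-a)$-subgaussian under $P_{U_i}$, uniformly in $w$, by Hoeffding's lemma. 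I expect pinning down this constant $b-a$, and confirming the subgaussianity holds uniformly over $w$, to be the only delicate point; the rest is bookkeeping.

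Granting this, the first hypothesis of Lemma~\ref{lemma:abstraction_hellstrom} with the choice $Q_Y = P_{W|\tilde{Z}_i = \tilde{z}_i, \tilde{Z}_{i+N} = \tilde{z}_{i+N}}$ and $\sigma = b-a$ gives, for each fixed pair,
\[
    \big| \bE_{P_{W,U_i|\tilde{z}_i,\tilde{z}_{i+N}}}[\widehat{\textnormal{gen}}_i] \big| \le \sqrt{2(b-a)^2\, \KL[\big]{P_{W,U_i|\tilde{z}_i,\tilde{z}_{i+N}}}{P_{W|\tilde{z}_i,\tilde{z}_{i+N}} \times P_{U_i}}}.
\]
Since $U_i$ is independent of the super-sample, $P_{U_i|\tilde{z}_i,\tilde{z}_{i+N}} = P_{U_i}$, so the relative entropy above is the disintegrated conditional mutual information $I(W;U_i|\tilde{Z}_i=\tilde{z}_i,\tilde{Z}_{i+N}=\tilde{z}_{i+N})$, which is finite by the assumed absolute continuity.

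To conclude, I would take expectations over $P_{\tilde{Z}_i,\tilde{Z}_{i+N}}$: the left side dominates $|\bE_{P_{W,\tilde{Z}_i,\tilde{Z}_{i+N},U_i}}[\widehat{\textnormal{gen}}_i]|$ by convexity of $|\cdot|$, while the right side is at most $\sqrt{2(b-a)^2\, I(W;U_i|\tilde{Z}_i,\tilde{Z}_{i+N})}$ by concavity of the square root (Jensen's inequality), restoring the averaged conditional mutual information. Summing over $i$, dividing by $N$, and applying the triangle inequality to $\bE_{P_{W,S}}[\textnormal{gen}(W,S)] = \frac{1}{N}\sum_i \bE[\widehat{\textnormal{gen}}_i]$ then yields~\eqref{eq:icmi_result}.
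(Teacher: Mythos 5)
Your proposal is correct and follows essentially the same route as the paper: the same per-index decomposition $\widehat{\textnormal{gen}}_i$, the same application of Lemma~\ref{lemma:abstraction_hellstrom} with $X=U_i$, zero mean under $P_{U_i}$, and $(b-a)$-subgaussianity from boundedness, and the same final assembly via $\bE_{P_{\smash{W,\tilde{S},U}}}[\widehat{\textnormal{gen}}] = \bE_{P_{W,S}}[\textnormal{gen}]$ and the triangle inequality. The only cosmetic difference is that the paper applies the lemma once with the joint choice $Y=(W,\tilde{Z}_i,\tilde{Z}_{i+N})$ and identifies $\KL[\big]{P_{W,\tilde{Z}_i,\tilde{Z}_{i+N},U_i}}{P_{W,\tilde{Z}_i,\tilde{Z}_{i+N}}\times P_{U_i}}$ with $I(W;U_i|\tilde{Z}_i,\tilde{Z}_{i+N})$ through the independence of $U_i$ and the pair, whereas you disintegrate on $(\tilde{z}_i,\tilde{z}_{i+N})$ first and re-average with Jensen's inequality---an equivalent argument, since the conditional mutual information is exactly the average of your disintegrated divergences.
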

\begin{proof}[Sketch of the proof]
In the setting of Lemma~\ref{lemma:abstraction_hellstrom}, let $f(X,Y) = \widehat{\textnormal{gen}}_i \big( W, \tilde{Z}_i, \tilde{Z}_{i+N}, U_i \big) \triangleq \ell \big( W, \tilde{Z}_{i + (1-U_i)N} \big) - \ell \big( W, \tilde{Z}_{i+U_iN} \big)$, $P_X = P_{U_i}$, and $P_Y = Q_Y = P_{\smash{W,\tilde{Z}_i,\tilde{Z}_{i+N}}}$.
Then, since $\widehat{\textnormal{gen}}_i$ is zero-mean and $(b-a)$-subgaussian (since it is bounded in $[a-b,b-a]$) under $P_{U_i}$, we have that
\begin{multline*}
 \big| \bE_{P_{\smash{W, \tilde{Z}_i, \tilde{Z}_{i+N}, U_i}}}  \big[ \widehat{\textnormal{gen}}_i \big( W, \tilde{Z}_i, \tilde{Z}_{i+N}, U_i \big) \big]  \big| \\
 \leq \sqrt{2(b-a)^2 I(W;U_i|\tilde{Z}_i,\tilde{Z}_{i+N})}.
\end{multline*}
If we use that $\big| \sum_{i=1}^N x_i \big| \leq \sum_{i=1}^N |x_i|$ and note that 
\begin{multline*}
 \bE_{P_{W,\smash{\tilde{S}},U}} \big[ \widehat{\textnormal{gen}} \big( W, \tilde{S}, U \big) \big] \\
 = \frac{1}{N} \sum_{i=1}^N \bE_{P_{\smash{W,\tilde{Z}_i,\tilde{Z}_{i+N},U_i}}}  \big[ \widehat{\textnormal{gen}}_i \big( W, \tilde{Z}_i, \tilde{Z}_{i+N},  U_i \big) \big],
\end{multline*}
we obtain the bound~\eqref{eq:icmi_result}.
\end{proof}

\begin{proposition}
\label{prop:cond_version_negrea_t24}
Consider the assumptions of the randomized subsample setting.
Also consider a subset $J \subseteq [N]$ such that $|J| = M$, which is uniformly distributed and independent of $W$, $\tilde{S}$, and $U$, and a random variable $R$ which is independent of $\tilde{S}$, $J$, and $U$.
If $P_{\smash{W,U_J|U_{J^c},\tilde{S},R}} \ll Q \times P_{U_J}$ for all $J$ and $\ell(w,z)$ is bounded in $[a,b]$ for all $w \in \cW$ and $z \in \cZ$,
then the expected generalization error is bounded as follows:
\begin{align}
\MoveEqLeft[1]
 \big| \bE_{P_{W,S}} \big[ \textnormal{gen}(W,S) \big] \big| \nonumber\\
 &\leq \bE \bigg[ \sqrt{ \frac{2(b-a)^2}{M}\KL[\big]{ P_{\smash{W,U_J|U_{J^c},\tilde{S},R}} }{ Q \times P_{U_J} } } \bigg],
 \label{eq:cond_version_negrea_t24}
\end{align}
where $Q=Q_{\smash{W|U_{J^c},\tilde{S},R}}$ is a $(\mathscr{U}^{\otimes (N - M)} \otimes \mathscr{Z}^{\otimes 2N} \otimes \mathscr{R} \otimes \mathscr{J})$-measurable distribution over $(\cW,\mathscr{W})$, and the expectation on the r.h.s. of~\eqref{eq:cond_version_negrea_t24} is with respect to $P_{\smash{J, U_{J^c}, \tilde{S}, R}}$.
\end{proposition}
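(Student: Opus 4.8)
The plan is to mirror the proof of Proposition~\ref{prop:negrea_t24_recovery} but carry it out in the randomized subsample setting, replacing the subset risk $\textnormal{gen}_J$ with the subset \emph{empirical} generalization error, exactly as Proposition~\ref{prop:icmi} replaces $\textnormal{gen}_i$ by $\widehat{\textnormal{gen}}_i$. Concretely, I would fix the subset $J$ and work conditionally on $(U_{J^c},\tilde{S},R)$, applying Lemma~\ref{lemma:abstraction_hellstrom} to the conditional distributions with the role of $X$ played by the Bernoulli block $U_J$ and the role of $Y$ played by the hypothesis $W$.

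First I would introduce the subset empirical generalization error,
\[
\widehat{\textnormal{gen}}_J(W,\tilde{S},U_J) \triangleq \frac{1}{M} \sum_{i \in J} \Big( \ell \big( W, \tilde{Z}_{i+(1-U_i)N} \big) - \ell \big( W, \tilde{Z}_{i+U_iN} \big) \Big),
\]
and set $P_X = P_{U_J}$ and $P_Y = Q_Y = P_{\smash{W|U_{J^c},\tilde{S},R}}$ in Lemma~\ref{lemma:abstraction_hellstrom}. For each fixed realization of $W$ and $\tilde{S}$ the summands are independent, symmetric functions of the $U_i$ that are bounded in $[a-b,b-a]$; hence $\widehat{\textnormal{gen}}_J$ is zero-mean and $(b-a)/\sqrt{M}$-subgaussian under $P_{U_J}$, which is condition (i) of the Lemma with $\sigma = (b-a)/\sqrt{M}$. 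Invoking~\eqref{eq:abstraction_hellstrom} conditionally would then yield, for each $(J,U_{J^c},\tilde{S},R)$,
\[
\big| \bE_{P_{\smash{W,U_J|U_{J^c},\tilde{S},R}}} \big[ \widehat{\textnormal{gen}}_J \big] \big| \leq \sqrt{ \frac{2(b-a)^2}{M} \KL[\big]{ P_{\smash{W,U_J|U_{J^c},\tilde{S},R}} }{ Q \times P_{U_J} } }.
\]
I would then apply Jensen's inequality to pull the absolute value through the expectation over $P_{\smash{J,U_{J^c},\tilde{S},R}}$, reproducing the right-hand side of~\eqref{eq:cond_version_negrea_t24}.

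The remaining step, and the one that needs care, is relating the averaged left-hand side back to the generalization error. Since $J$ is uniform over the size-$M$ subsets and independent of $(W,\tilde{S},U)$, each index lies in $J$ with probability $M/N$, so $\bE_J[\widehat{\textnormal{gen}}_J] = \widehat{\textnormal{gen}}$; taking the full expectation and using $\bE_{P_{\smash{W,\tilde{S},U}}}[\widehat{\textnormal{gen}}] = \bE_{P_{W,S}}[\textnormal{gen}(W,S)]$ identifies $\bE_{P_{\smash{W,\tilde{S},U,J,R}}}[\widehat{\textnormal{gen}}_J]$ with the quantity to be bounded. The main obstacle is the conditional subgaussianity bookkeeping: one must check that conditioning on $(U_{J^c},\tilde{S},R)$ leaves $P_{U_J}$ unchanged — which holds because the $U_i$ are i.i.d.\ and independent of $\tilde{S}$ and $R$ — and that the independence of the $M$ summands is precisely what produces the $1/\sqrt{M}$ improvement in the subgaussianity parameter, as in Proposition~\ref{prop:icmi}. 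The absolute continuity hypothesis $P_{\smash{W,U_J|U_{J^c},\tilde{S},R}} \ll Q \times P_{U_J}$ is exactly what guarantees that the relative entropy on the right-hand side is finite.
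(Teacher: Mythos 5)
Your proposal is correct and follows essentially the same route as the paper's own proof: fix $J$, apply Lemma~\ref{lemma:abstraction_hellstrom} conditionally on $(U_{J^c},\tilde{S},R)$ with $X=U_J$ and $Y=W$, use the boundedness and independence of the $M$ summands to get the $(b-a)/\sqrt{M}$-subgaussianity (condition (i)), then Jensen's inequality over $P_{\smash{J,U_{J^c},\tilde{S},R}}$ and the identity $\bE_{P_{\smash{W,\tilde{S},U,R,J}}}[\widehat{\textnormal{gen}}_J] = \bE_{P_{W,S}}[\textnormal{gen}(W,S)]$, which you even justify more explicitly than the paper does. The only blemish is that you write ``$P_Y = Q_Y = P_{\smash{W|U_{J^c},\tilde{S},R}}$'' but then state the bound with an arbitrary $Q$; since the Lemma holds for every $Q_Y$, you should simply leave $Q_Y$ free rather than pinning it to $P_Y$.
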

\begin{proof}[Sketch of the proof]
In the setting of Lemma~\ref{lemma:abstraction_hellstrom} and assuming a fixed $J$, let $f(X,Y) = \widehat{\textnormal{gen}}_J \big( W, U_J, \tilde{S}_J, \tilde{S}_{J+N} \big) \triangleq \frac{1}{M} \sum_{i \in J} \big[ \ell \big( W, \tilde{Z}_{i+(1-U_i)N} \big) - \ell \big( W, \tilde{Z}_{i+U_iN} \big) \big]$, $P_X = P_{U_J}$, and $P_Y = P_{\smash{W|U_{J^c}, \tilde{S}, R}}$.
Since each of the $M$ summands of $\widehat{\textnormal{gen}}_J$ is zero-mean and bounded in $[a-b,b-a]$ under $P_{U_J}$, then $\widehat{\textnormal{gen}}_J$ is zero-mean and $(b-a)/\sqrt{M}$-subgaussian under $P_{U_J}$.
Therefore, we have that
\begin{align*}
\MoveEqLeft[.9]
 \big| \bE_{P_{\smash{W,U_J| U_{J^c},\tilde{S},R}}} \big[ \widehat{\textnormal{gen}}_J \big( W, U_J, \tilde{S}_J, \tilde{S}_{J+N} \big) \big] \big| \nonumber\\
 &\leq \sqrt{\frac{2(b-a)^2}{M} \KL[\big]{P_{\smash{W,U_J| U_{J^c},\tilde{S},R}}}{Q_{\smash{W| U_{J^c},\tilde{S},R}} \times P_{U_J}}}.
\end{align*}
By Jensen's inequality, $|\bE_{P_{\smash{W,\tilde{S},U,R,J}}}[\widehat{\textnormal{gen}}_J(W,U_J\tilde{S}_J,\tilde{S}_{J+N})]| \allowbreak \leq \bE_{P_{\smash{J,U_{J^c},\tilde{S},R}}} [| \bE_{P_{\smash{W,U_J| U_{J^c},\tilde{S},R}}} [ \widehat{\textnormal{gen}}_J ( W, U_J, \tilde{S}_J, \tilde{S}_{J+N} ) ] |]$ and the proof of the Proposition is complete.
%
\end{proof}

\begin{remark}
Note that a similar bound can be directly obtained combining a slight modification of~\citep[Theorem 3.1]{haghifam2020sharpened} (where $R$ is included) and~\citep[Lemma 3.6]{haghifam2020sharpened}.
However, this approach results in a bound without the absolute value.
\end{remark}

\subsection{Non-achievable subset bounds}

Despite its versatility, the present framework does not allow us to obtain bounds where all the expectations are outside of the square root, such as~\citep[Theorem~2.5]{negrea2019information} or~\citep[Theorem~3.7]{haghifam2020sharpened}.
If we look at~\eqref{eq:abstraction_hellstrom}, we see that the term $\KL{P_{X,Y}}{Q_Y \times P_X}$, which can also be written as $\bE_{P_X}[\KL{P_{Y|X}}{Q_Y}]$, is inside the square root; hence, Jensen's inequality prevents us from taking the expectation with respect to $P_X$ ($P_{S_J}$ or $P_{U_J}$ in our case) outside the concave square root function.

In particular, a tighter (when $M=1$) version of~\citep[Theorem~2.4]{negrea2019information} is found in~\citep[Theorem 2.5]{negrea2019information}, with the same conditions as Proposition~\ref{prop:negrea_t24_recovery}, except that $\ell(w,z)$ is now required to be bounded in $[a,b]$; namely
\begin{multline}
 \bE_{P_{W,S}} \big[ \textnormal{gen}(W,S) \big] \\ \leq \frac{b-a}{\sqrt{2}} \
 \bE_{P_{ J,S,R }} \bigg[ \sqrt{ \KL[\big]{P_{W |S,R }}{Q_{ \smash{W|S_{J^c},R} }}} \bigg].
 \label{eq:negrea_t25}
\end{multline}

In the same spirit, a tighter version of Proposition~\ref{prop:cond_version_negrea_t24} is~\citep[Theorem~3.7]{haghifam2020sharpened}, from which we present a slight extension, as it will be needed for the bounds on the stochastic gradient Langevin dynamics algorithm in the following section.

\begin{proposition}[{Extension of~\citep[Theorem~3.7]{haghifam2020sharpened}}]
\label{prop:ext_haghifam_t43}
Consider the assumptions of the randomized subsample setting.
Also consider a subset $J \subseteq [N]$ such that $|J| = M$, which is uniformly distributed and independent of $W$, $\tilde{S}$, and $U$, and a random variable $R$ which is independent of $\tilde{S}$, $J$, and $U$.
If $P_{\smash{W|U,\tilde{S},R}} \ll Q_{\smash{W|U_{J^c},\tilde{S},R}}$ for all $J$ and $\ell(w,z)$ is bounded in $[a,b]$ for all $w \in \cW$ and $z \in \cZ$,
then the expected generalization error is bounded as follows:
\begin{multline}
 \bE_{P_{W,S}} \big[ \textnormal{gen}(W,S) \big] \leq \sqrt{2}(b-a)\\ \bE_{P_{\smash{J,\tilde{S},U,R}}} \bigg[ \sqrt{ \KL[\big]{P_{\smash{W|U,\tilde{S},R}}}{Q_{\smash{W|U_{J^c},\tilde{S},R}}}} \bigg],
 \label{eq:ext_haghifam_t43}
\end{multline}
where $Q_{\smash{W|U_{J^c},\tilde{S},R}}$ is a $(\mathscr{U}^{\smash{\otimes (N - M)}} \otimes \mathscr{Z}^{\otimes 2N} \otimes \mathscr{R} \otimes \mathscr{J})$-measurable distribution over $(\cW,\mathscr{W})$.
\end{proposition}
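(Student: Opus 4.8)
The plan is to abandon Lemma~\ref{lemma:abstraction_hellstrom} here (it keeps the divergence inside the root, and, as the surrounding discussion explains, Jensen then blocks pulling the $P_{U_J}$-average outside), and instead run a direct change-of-measure argument at the level of the \emph{per-subset} empirical generalization error. First I would reduce to $\widehat{\textnormal{gen}}_J$ exactly as in the proof of Proposition~\ref{prop:cond_version_negrea_t24}: since $J$ is uniform over size-$M$ subsets, $\bE_{J}[\widehat{\textnormal{gen}}_J] = \widehat{\textnormal{gen}}$, and combined with $\bE_{P_{W,\tilde S,U}}[\widehat{\textnormal{gen}}] = \bE_{P_{W,S}}[\textnormal{gen}]$ and the independence of $R$, it suffices to bound $\bE_{P_{J,\tilde S,U,R}}\bE_{P_{W|U,\tilde S,R}}[\widehat{\textnormal{gen}}_J(W,U_J,\tilde S)]$.

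The workhorse is the elementary change-of-measure inequality: for any $P \ll Q$ on $(\cW,\mathscr{W})$ and any $\phi$ bounded in $[-c,c]$, one has $\bE_P[\phi] - \bE_Q[\phi] \le \sqrt{2c^2 \KL{P}{Q}}$. This follows from Donsker--Varadhan, $\bE_P[\lambda\phi] \le \KL{P}{Q} + \log\bE_Q[e^{\lambda\phi}]$, applying Hoeffding's lemma to $\phi - \bE_Q[\phi]$ (which is $c$-subgaussian under $Q$) and optimizing over $\lambda>0$. I would invoke it with $P = P_{W|U,\tilde S,R}$ for a \emph{fixed} realization of the whole vector $U=(U_J,U_{J^c})$, with $Q = Q_{W|U_{J^c},\tilde S,R}$, and with $\phi(w) = \widehat{\textnormal{gen}}_J(w,U_J,\tilde S)$, which is bounded in $[-(b-a),(b-a)]$; hence $c=b-a$ and the bound reads $\bE_{P_{W|U,\tilde S,R}}[\widehat{\textnormal{gen}}_J] - \bE_{Q}[\widehat{\textnormal{gen}}_J] \le \sqrt2 (b-a)\sqrt{\KL{P_{W|U,\tilde S,R}}{Q_{W|U_{J^c},\tilde S,R}}}$.

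The decisive step is then to average this pointwise inequality over $U_J \sim P_{U_J}$, holding $\tilde S, U_{J^c}, R, J$ fixed. The first term reproduces the target quantity, and the subtracted term vanishes: because the reference $Q_{W|U_{J^c},\tilde S,R}$ does not depend on $U_J$, I can swap expectations, $\bE_{P_{U_J}}\bE_{Q}[\widehat{\textnormal{gen}}_J(\cdot,U_J,\tilde S)] = \bE_{Q}\bE_{P_{U_J}}[\widehat{\textnormal{gen}}_J(W,U_J,\tilde S)]$, and for each fixed $w$ the inner expectation is zero, since flipping every $U_i$ with $i\in J$ negates each summand of $\widehat{\textnormal{gen}}_J$ while leaving $P_{U_J}$ invariant. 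What survives is precisely $\sqrt2(b-a)\,\bE_{P_{U_J}}[\sqrt{\KL{P_{W|U,\tilde S,R}}{Q_{W|U_{J^c},\tilde S,R}}}]$; taking the outer expectation over $P_{J,\tilde S,U_{J^c},R}$ and recombining it with the $P_{U_J}$-average into $\bE_{P_{J,\tilde S,U,R}}$ yields~\eqref{eq:ext_haghifam_t43}.

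I expect the main obstacle to be conceptual rather than computational: the change of measure must be carried out \emph{before} integrating over $U_J$, with the divergence conditioned on the \emph{full} vector $U$; this is exactly what makes the correction term $\bE_Q[\widehat{\textnormal{gen}}_J]$ integrate to zero and keeps every expectation outside the root, in contrast to Proposition~\ref{prop:cond_version_negrea_t24}, where Jensen traps the $P_{U_J}$-average inside it. It is also worth noting why no $1/\sqrt M$ factor appears: Hoeffding is applied with respect to $W \sim Q$, and all $M$ loss differences are evaluated at the \emph{same} $w$, so the range $b-a$ enters directly rather than being reduced by averaging independent Bernoulli trials. Carrying the auxiliary $R$ through as an extra, independent conditioning variable is what upgrades~\citep[Theorem~3.7]{haghifam2020sharpened} to the stated extension.
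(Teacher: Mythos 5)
Your proof is correct and is in substance the same argument the paper relies on: the paper's proof simply defers to the Donsker--Varadhan proof of \citep[Theorem 3.7]{haghifam2020sharpened}, noting that $R$ enters as additional conditioning and that boundedness of the loss yields the $(b-a)$ factor, and your write-up reconstructs exactly that argument --- a change of measure applied conditionally on the \emph{full} vector $U$ (which is precisely what keeps the disintegrated divergence outside the square root), with the reference-measure term annihilated by the sign-symmetry of $U_J$ and Hoeffding's lemma supplying the constant. The only difference is presentational: you spell out, self-contained, the steps the paper outsources to the citation.
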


\begin{proof}
The extension follows trivially from the proof of Theorem~3.7 on~\citep{haghifam2020sharpened}.
The random variable $R$ is incorporated using the Donsker--Varadhan variational formula~\citep[Theorem 3.5]{polyanskiy2014lecture} with the probability distributions
$P(\tilde{S},U,R)$ and $Q(\tilde{S},U_J,J,R)$ instead of $P(\tilde{S},U)$ and $Q(\tilde{S},U_J,J)$.
Then, the term $(b-a)$ results from the boundedness of the Donsker--Varadhan variational function.
Finally, the extension to arbitrary sizes of the subset $J$ is already included in the original proof in~\citep[Appendix D]{haghifam2020sharpened}.
\end{proof}

\begin{remark}
In Propositions~\ref{prop:negrea_t24_recovery} and~\ref{prop:cond_version_negrea_t24}, as well as in~\eqref{eq:negrea_t25} and Proposition~\ref{prop:ext_haghifam_t43}, when $M=1$, similarly to Propositions~\ref{prop:bu_p1_recovery} and~\ref{prop:icmi}, the outermost expectation with respect to $J$ is equivalent to a sum over all the indices $i \in [N]$ divided by $N$.
That is, $\bE_{J}[f(J)] = \frac{1}{N} \sum_{i=1}^N f(i)$.
\end{remark}
\label{sec:subset_bounds}

\section{Bounds on Noisy Iterative Algorithms}
In this section, we leverage the bound from Proposition~\ref{prop:ext_haghifam_t43}, 
not based on Lemma~\ref{lemma:abstraction_hellstrom}, to extend the result on Langevin dynamics from~\citep[Theorem~4.2]{haghifam2020sharpened} to stochastic gradient Langevin dynamics.
After that, we propose an alternative for such extension for algorithms with large Lipschitz constant or large discrepancy between the gradients of two samples.
Finally, we combine both bounds to generate a refined version that better 
exploits the 
samples' and hypotheses' trajectory knowledge.

\subsection{Stochastic gradient Langevin dynamics algorithm}

The stochastic gradient Langevin dynamics (SGLD) algorithm is an iterative procedure to learn a parametrized hypothesis $W_{\theta}$ from a dataset $S$.
More specifically, SGLD works for hypotheses that are completely characterized by a parameter $\theta \in \bR^d$.
The algorithm starts with a random initialization, $\theta_0$, of the parameter.
Then, at each iteration $t \in [T]$, it samples a random batch $S_{V_t}$ from the dataset $S$; updates the previous parameter $\theta_{t-1}$ with a scaled $(-\eta_t)$ version of the gradient of the loss given that parameter and the random batch, i.e., $\nabla_{\theta_{t-1}} L_{S_{V_t}} (W_{\theta_{t-1}})$; and adds a scaled $(\sigma_t)$ isotropic Gaussian random noise $\varepsilon_t \sim \cN(0,I_d)$. That is, 
\begin{equation*}
	\theta_t \leftarrow \theta_{t-1} - \eta_t \nabla_{\theta_{t-1}} L_{S_{V_t}}(W_{\theta_{t-1}}) + \sigma_t \varepsilon_t,
\end{equation*}
where $W_{\theta_t}$ (or $W_t$) is the hypothesis at iteration $t$, and $W_{\theta_T}$ is the final hypothesis of SGLD.
When the batch is composed of all samples, i.e., there is no stochasticity in the sample selection, the algorithm is called 
Langevin dynamics (LD).

\subsection{Expected generalization error bounds}

\begin{proposition}[{Extension of~\citep[Theorem~4.2]{haghifam2020sharpened} to SGLD}]
\label{prop:mini-batch-ld}
Under the conditions of the randomized subsample setting, if we assume that the loss function $\ell(w,z)$ is bounded in $[a,b]$ for all $w \in \cW$ and all $z \in \cZ$, then the expected generalization error of the SGLD is bounded as follows:
\begin{multline}
 \bE_{P_{W,S}} \big[ \textnormal{gen}(W,S) \big] \leq \sqrt{2} (b-a) \\
 \bE_{\smash{P_1}} \Bigg[ \sqrt{ \sum_{\smash{t \in \mathcal{T}_{J}(V^T)}} \bE_{\smash{P_2^t}} \bigg[ \frac{\eta_t^2 \|\zeta_{J,t}\|^2}{2\sigma_t^2|V_t|^2} (U_J - \pi_{J,t})^2 \bigg] } \Bigg],
 \label{eq:mini-batch-ld}
\end{multline}
where $P_1 = P_{\smash{J, \tilde{S}, U, V^T}}$, $P_2^t= P_{\smash{W^{t-1}| U, \tilde{S}, V^{t-1}}}$, $|J|=1$, $|V_t|$ is the cardinality of the batch, $\mathcal{T}_J \big( V^T \big)$ is the set of iterations for which sample $J$ was included in the batches from $V^T$, $\zeta_{J,t} \triangleq \nabla_{\theta_{t-1}} \ell( W_{\smash{\theta_{t-1}}}, \tilde{Z}_{J}) - \nabla_{\theta_{t-1}} \ell( W_{\smash{\theta_{t-1}}}, \tilde{Z}_{J+N})$ is the two-sample incoherence at iteration $t$, and $\pi_{J,t}$ is an estimate (based on the samples' and hypotheses' trajectory) of the probability that $U_J = 1$.
\end{proposition}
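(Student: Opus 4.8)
The plan is to invoke Proposition~\ref{prop:ext_haghifam_t43} with the auxiliary random variable $R$ taken to be the batch-index sequence $V^T=(V_1,\dots,V_T)$ and with $|J|=1$, and then to bound the relative entropy it produces by exploiting the Markov structure of the SGLD iterates. Writing $J=\{j\}$, the reference distribution $P_{W|U,\tilde{S},V^T}$ is the law of the trajectory $\theta^T=(\theta_0,\dots,\theta_T)$ (equivalently $W^T$) of the true recursion, in which the contribution of sample $J$ to the batch gradient at step $t$ is $\nabla_{\theta_{t-1}}\ell(W_{\theta_{t-1}},\tilde{Z}_{J+U_JN})$. For the surrogate $Q_{W|U_{J^c},\tilde{S},V^T}$ I would use the law of the \emph{same} recursion but with that contribution replaced by the soft combination $(1-\pi_{J,t})\nabla_{\theta_{t-1}}\ell(W_{\theta_{t-1}},\tilde{Z}_J)+\pi_{J,t}\nabla_{\theta_{t-1}}\ell(W_{\theta_{t-1}},\tilde{Z}_{J+N})$, where $\pi_{J,t}$ is any $\theta^{t-1}$-measurable (hence $U_J$-independent) estimate of $\Prob{U_J=1}$. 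Because both laws share the Gaussian innovation $\cN(0,\sigma_t^2 I_d)$ at every step, they have a common support and the absolute-continuity hypothesis of Proposition~\ref{prop:ext_haghifam_t43} holds automatically.

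Next I would decompose the relative entropy along the trajectory. Under both $P$ and $Q$ the sequence $\theta^T$ is a Markov chain started from a shared initialization $P_{\theta_0}$, so the chain rule for relative entropy together with the data-processing inequality for the map $\theta^T\mapsto W=W_{\theta_T}$ gives
\begin{equation*}
 \KL[\big]{P_{W|U,\tilde{S},V^T}}{Q_{W|U_{J^c},\tilde{S},V^T}} \leq \sum_{t=1}^T \bE_{P_2^t}\big[\KL[\big]{P_{\theta_t|\theta^{t-1}}}{Q_{\theta_t|\theta^{t-1}}}\big],
\end{equation*}
where I use that $\theta^{t-1}$ sees only $V^{t-1}$, so its law is $P_2^t=P_{W^{t-1}|U,\tilde{S},V^{t-1}}$. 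At step $t$ the two conditional laws are Gaussians with common covariance $\sigma_t^2 I_d$, so their relative entropy is half the squared Euclidean distance between the means divided by $\sigma_t^2$. The means agree whenever $j\notin V_t$, which annihilates every term outside $\cT_J(V^T)$; for $t\in\cT_J(V^T)$ the means differ only through the sample-$J$ term, and since the hard and soft contributions differ by $-(U_J-\pi_{J,t})\zeta_{J,t}$, the overall mean gap equals $\tfrac{\eta_t}{|V_t|}(U_J-\pi_{J,t})\zeta_{J,t}$. This produces the per-step bound $\tfrac{\eta_t^2\|\zeta_{J,t}\|^2}{2\sigma_t^2|V_t|^2}(U_J-\pi_{J,t})^2$.

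Substituting this bound, restricting the sum to $\cT_J(V^T)$, and identifying the two expectation layers (the trajectory expectation $\bE_{P_2^t}$ inside the square root and the expectation $\bE_{P_1}$ over $(J,\tilde{S},U,V^T)$ outside it, exactly as Proposition~\ref{prop:ext_haghifam_t43} dictates) yields~\eqref{eq:mini-batch-ld}. I expect the main obstacle to be this two-level bookkeeping: one must verify that the chain-rule step produces precisely the nested form $\bE_{P_1}\big[\sqrt{\sum_t \bE_{P_2^t}[\,\cdot\,]}\big]$ and that conditioning $\theta^{t-1}$ on $V^{t-1}$ rather than $V^T$ is harmless because the early iterates do not see the later batches. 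The genuinely new ingredient relative to~\citep[Theorem~4.2]{haghifam2020sharpened} is only the tracking of the batch-size factor $1/|V_t|$ and the index set $\cT_J(V^T)$; the remainder mirrors the full-batch argument.
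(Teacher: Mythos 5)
Your proposal follows essentially the same route as the paper's proof: it invokes Proposition~\ref{prop:ext_haghifam_t43} with $R = V^T$ and $|J|=1$, decomposes the relative entropy along the trajectory via the chain rule and data processing, chooses the per-step surrogate kernel as a Gaussian whose mean replaces the sample-$J$ gradient by the $\pi_{J,t}$-weighted average of the two candidate gradients (so the terms with $J \notin V_t$ vanish), and finishes with the closed-form Gaussian relative entropy, exactly as the paper does. One small remark: your assertion that $\theta^T$ is Markov under $Q$ is both unnecessary and, in general, false when $\pi_{J,t}$ depends on the whole history $\theta^{t-1}$ (which you allow, and which the paper exploits deliberately by conditioning $Q$'s kernel on the full past $W^{t-1}$); your displayed decomposition already conditions on $\theta^{t-1}$, so the argument goes through without that claim.
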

\begin{proof}[Sketch of the proof]
Similarly to the proof of Theorem~4.2 in~\citep{haghifam2020sharpened}, we start from the bound in Proposition~\ref{prop:ext_haghifam_t43}, where here we let $R$ be the full batch trajectory $V^T$, and we use the fact that $\KL[\big]{P_{\smash{W|U,\tilde{S},V^T}}}{Q_{\smash{W|U_{J^c},\tilde{S},V^T}}} \leq \sum_{t=1}^T \allowbreak \bE_{\smash{P_2^t}} \big[ \KL[\big]{P_{\smash{W_t| W_{t-1}, U, \tilde{S}, V^t}}}{Q_{\smash{W| W^{t-1}, U_{J^c}, \tilde{S}, V^t}}} \big]$.
Then, as in~\citep{bu2020tightening}, we restrict the sum to only the non-zero terms, that is $\sum_{t=1}^T \bE_{\smash{P_2^t}} \big[ \KL[\big]{P_{\smash{W_t| W_{t-1}, U, \tilde{S}, V^t}}}{Q_{\smash{W| W^{t-1}, U_{J^c}, \tilde{S}, V^t}}} \big] = \sum_{t \in \mathcal{T}_J(V^T)}\! \bE_{\smash{P_2^t}}\! \big[  \KL[\big]{P_{\smash{W_t| W_{t-1}, U, \tilde{S}, V^t}}}{Q_{\smash{W| W^{t-1}, U_{J^c}, \tilde{S}, V^t}}} \!\big]$.
After that, similarly to~\citep{haghifam2020sharpened}, we note that $P_{\smash{W_t|W_{t-1}, U, \tilde{S}, V^t}}$ is a Gaussian distribution and we then let $Q_{\smash{W|W^{t-1}, U_{J^c}, \tilde{S}, V^t}}$ be also a Gaussian distribution, but with a different mean.
The first distribution uses $U_J$ to determine which gradient $\nabla_{\theta_{t-1}} \ell(W_{\theta_{t-1}},\tilde{Z}_{J+U_J N})$ appears on its mean, while the latter uses a weighted average of both gradients by means of $\pi_{J,t}$.
Finally, we use the analytical expression for the relative entropy between two Gaussian distributions to obtain~\eqref{eq:mini-batch-ld}.
\end{proof}

\begin{figure*}[!t]
\normalsize
\setcounter{MYtempeqncnt}{\value{equation}}
\setcounter{equation}{9}
\begin{equation}
\bE_{P_{W,S}} \big[ \textnormal{gen}(W,S) \big]  \leq  \sqrt{2}(b-a) \, \bE_{P_1} \smash{ \Bigg[  \sqrt{ \sum\nolimits_{t \in \mathcal{T}_{J}(V^T)} \bE_{P_2^t} \bigg[ {-} \log \bigg( \big| U_J - \pi_{J,t} \big| \exp \Big( - \frac{\eta_t^2 \|\zeta_{J,t}\|^2}{2\sigma_t^2|V_t|^2} \Big) + \big| \bar{\pi}_{J,t} - U_J \big| \bigg) \bigg] }  \Bigg] }
 \label{eq:mini-batch-ld-mixture}
\end{equation}
\vspace*{-5pt}
\setcounter{equation}{\value{MYtempeqncnt}}
\hrulefill
\vspace*{-5pt}
\end{figure*}

\begin{corollary} 
\label{cor:mini-batch-ld-lipschitz}
Consider the setting of Proposition~\ref{prop:mini-batch-ld}.
In the case that $\ell$ is $L$-Lipschitz, and if we assume that the batches have a constant size $K$, the generalization error of the SGLD is bounded as follows:
\begin{multline}
 \bE_{P_{W,S}} \big[ \textnormal{gen}(W,S) \big] \leq \frac{2L}{K}(b-a) \\
 \bE_{P_1} \Bigg[ \sqrt{\sum_{\smash{t \in \mathcal{T}_{J}(V^T)}} \!\! \bE_{P_2^t} \bigg[ \frac{\eta_t^2}{\sigma_t^2} (U_J - \pi_{J,t})^2 \bigg]} \Bigg].
    \label{eq:mini-batch-ld-lipschitz}
\end{multline}
\end{corollary}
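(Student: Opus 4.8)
The plan is to obtain \eqref{eq:mini-batch-ld-lipschitz} as a direct specialization of Proposition~\ref{prop:mini-batch-ld}, by controlling the two-sample incoherence $\zeta_{J,t}$ under the Lipschitz assumption and then absorbing the resulting constants into the prefactor. First I would recall that $\ell$ being $L$-Lipschitz in its hypothesis argument means $\|\nabla_{\theta} \ell(W_{\theta}, z)\| \leq L$ for every $z \in \cZ$ and every $\theta \in \bR^d$. Applying the triangle inequality to the definition $\zeta_{J,t} = \nabla_{\theta_{t-1}} \ell(W_{\theta_{t-1}}, \tilde{Z}_J) - \nabla_{\theta_{t-1}} \ell(W_{\theta_{t-1}}, \tilde{Z}_{J+N})$ then yields $\|\zeta_{J,t}\| \leq 2L$, and hence $\|\zeta_{J,t}\|^2 \leq 4L^2$, uniformly over all iterations $t$ and all realizations of the trajectory.

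Next I would substitute this bound, together with the constant batch size $|V_t| = K$, into the per-iteration summand appearing inside the inner expectation of \eqref{eq:mini-batch-ld}, so that $\frac{\eta_t^2 \|\zeta_{J,t}\|^2}{2\sigma_t^2|V_t|^2} \leq \frac{2L^2 \eta_t^2}{\sigma_t^2 K^2}$. Because this upper bound is deterministic (independent of $W^{t-1}$) and the replacement only increases each summand, monotonicity of the inner expectation $\bE_{P_2^t}[\cdot]$, of the sum over $t \in \mathcal{T}_J(V^T)$, of the square root, and of the outer expectation $\bE_{P_1}[\cdot]$ ensures that the entire right-hand side of \eqref{eq:mini-batch-ld} is upper bounded by the same expression with $\|\zeta_{J,t}\|^2 / |V_t|^2$ replaced by $4L^2/K^2$.

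Finally I would factor the constant $2L^2/K^2$ out of the square root, using $\sqrt{2L^2/K^2} = \sqrt{2}\,L/K$, and combine it with the original prefactor $\sqrt{2}(b-a)$ to get $\sqrt{2}(b-a)\cdot \frac{\sqrt{2}L}{K} = \frac{2L}{K}(b-a)$, which reproduces exactly the prefactor and the remaining summand $\frac{\eta_t^2}{\sigma_t^2}(U_J - \pi_{J,t})^2$ of \eqref{eq:mini-batch-ld-lipschitz}. I do not anticipate any real obstacle, since the statement is a plain specialization of Proposition~\ref{prop:mini-batch-ld}; the only point deserving care is the (standard) transfer of the single-sample gradient-norm bound $L$ to the factor-two bound $2L$ on the difference of gradients via the triangle inequality, and the bookkeeping of the square-root constants so that $\sqrt{2}\cdot\sqrt{2}=2$ and $\sqrt{L^2}/\sqrt{K^2}=L/K$ combine correctly.
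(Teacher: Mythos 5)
Your proposal is correct and is essentially the paper's (implicit) argument: the corollary is stated as a direct specialization of Proposition~\ref{prop:mini-batch-ld}, obtained by bounding $\|\zeta_{J,t}\| \leq 2L$ via the Lipschitz property and the triangle inequality, setting $|V_t| = K$, and pulling the constant $\sqrt{2L^2/K^2} = \sqrt{2}L/K$ out of the square root so that $\sqrt{2}(b-a)\cdot\sqrt{2}L/K = \tfrac{2L}{K}(b-a)$. The constant bookkeeping and the monotonicity steps you invoke all check out.
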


The bound from Proposition~\ref{prop:mini-batch-ld} recovers~\citep[Theorem~4.2]{haghifam2020sharpened} for LD (i.e., $|V_t| = N$) and, for SGLD, it can be compared with other bounds.
For instance, under the assumptions from Corollary~\ref{cor:mini-batch-ld-lipschitz}, if we compare it with~\citep[Theorem 3.1]{negrea2019information}:
\[
 \bE_{P_{W,S}} \big[ \textnormal{gen}(W,S) \big] \leq \frac{L(b-a)}{\sqrt{2}K} \,
 \bE_{P_{\smash{J, V^T}}} \Bigg[ \sqrt{\sum_{\smash{t \in \mathcal{T}_{J}(V^T)}} \! \frac{\eta_t^2}{\sigma_t^2}} \Bigg],
\]
we note that~\eqref{eq:mini-batch-ld-lipschitz} has a worse constant.
However, if $\bE_{P_2^t} \big[ (U_J - \pi_{J,t})^2 \big] \leq 1/8$, the bound from Corollary~\ref{cor:mini-batch-ld-lipschitz} is tighter.
This is expected to happen after some iterations, when a good estimate of $U_J$ is possible.
The same happens when we assume $M=1$ and compare it with~\citep[Proposition 3]{bu2020tightening}, where $\bE_{P_2^t} \big[ (U_J - \pi_{J,t})^2 \big] \leq 1/16$ is required.

\begin{remark}
The estimate $\pi_{J,t}$ is dependent  on the samples' and hypotheses' trajectory.
An example on how to build such an estimate, based on binary hypothesis testing, is presented in~\citep{haghifam2020sharpened} for LD.
Adapted to SGLD, we let the estimate $\pi_{J,t}$ be a function $\phi: \bR \rightarrow [0,1]$ of the log-likelihood ratio between the probability that $U_J = 1$ and $U_J = 0$, based on $(W^{t-1},J,\tilde{S},U_{J^c},V^{t})$. That is, 
\begin{align*}
 \pi_{J,t} 
 &\triangleq \phi \left( \log \frac{P_{\smash{U_J| W^{t-1}, \tilde{S}, U_{J^c},V^{t}}}(1)}{P_{\smash{U_J| W^{t-1}, \tilde{S}, U_{J^c},V^{t}}}(0)} \right) \\
 &= \phi \left( \sum\nolimits_{t \in T_J(V^{t})} \big( Y_{J,t,0} - Y_{J,t,1} \big) \right),
\end{align*}
where
\begin{align*}
\MoveEqLeft[6]
 Y_{J,t,u} = \frac{1}{2 \sigma_t^2} \Big\| \theta_t {-} \theta_{t-1} {+} \frac{\eta_t}{|V_{t}|} \Big( \nabla_{\smash{\theta_{t-1}}} \ell \big( W_{\smash{\theta_{t-1}}}, \tilde{Z}_{J + u N} \big)  \\ 
 &+ (|V_{t}|-1) \nabla_{\smash{\theta_{t-1}}} L_{S_{V_{t} \setminus J}}(W_{\theta_{t-1}}) \Big) \Big\|^2.
\end{align*}
\end{remark}

A potential weakness in the bound from Proposition~\ref{prop:mini-batch-ld} is its linear dependence with $\|\zeta_{J,t}\|$, which might make the bound loose if the Lipschitz constant $L$ is large.
As a remedy, we propose the following alternative.

\addtocounter{equation}{1}

\begin{proposition}
\label{prop:mini-batch-ld-mixture}
Under the setting of Proposition~\ref{prop:mini-batch-ld}, the expected generalization error of SGLD is bounded as
shown in~\eqref{eq:mini-batch-ld-mixture} at the top of the page,
where $\bar{\pi}_{J,t} = 1 -\pi_{J,t}$.
\end{proposition}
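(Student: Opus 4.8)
The plan is to mirror the proof of Proposition~\ref{prop:mini-batch-ld} step for step, changing only the auxiliary distribution $Q$ and the inequality used to control its per-iteration contribution. First I would start from Proposition~\ref{prop:ext_haghifam_t43} with $R$ set to the full batch trajectory $V^T$, apply the same sub-additivity (chain rule) bound on $\KL[\big]{P_{\smash{W|U,\tilde{S},V^T}}}{Q_{\smash{W|U_{J^c},\tilde{S},V^T}}}$, and restrict the resulting sum to the active iterations $t \in \mathcal{T}_J(V^T)$, exactly as in that proof. This reduces the problem to bounding, for each such $t$, the relative entropy between the one-step kernel $P_{\smash{W_t|W_{t-1},U,\tilde{S},V^t}}$ and a freely chosen $Q$.

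The single essential change is the choice of $Q$. Recall that $P_{\smash{W_t|W_{t-1},U,\tilde{S},V^t}}$ is the Gaussian $\cN(\mu_{U_J},\sigma_t^2 I_d)$, whose mean uses the gradient $\nabla_{\theta_{t-1}}\ell(W_{\theta_{t-1}},\tilde{Z}_{J+U_JN})$ of the realized sample; write $\mu_1,\mu_0$ for the two possible means, which differ by $(\eta_t/|V_t|)\zeta_{J,t}$. Whereas Proposition~\ref{prop:mini-batch-ld} takes $Q$ to be a single Gaussian, here I would instead take the two-component mixture
\[
 Q = \pi_{J,t}\,\cN(\mu_1,\sigma_t^2 I_d) + \bar{\pi}_{J,t}\,\cN(\mu_0,\sigma_t^2 I_d),
\]
which is admissible because its weights and both means depend only on $(W^{t-1},U_{J^c},\tilde{S},V^t)$ and not on $U_J$. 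I would then apply \citep[Lemma~2]{rodriguez2020upper}, the bound $\KL{P}{\sum_k w_k Q_k}\leq -\log\sum_k w_k e^{-\KL{P}{Q_k}}$, whose two component divergences are the closed-form same-covariance Gaussian relative entropies: one vanishes (the component matching $U_J$) and the other equals $\eta_t^2\|\zeta_{J,t}\|^2/(2\sigma_t^2|V_t|^2)$.

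A short bookkeeping step then finishes the argument. For $U_J\in\{0,1\}$ and $\pi_{J,t}\in[0,1]$, the weight on the mismatched component is $|U_J-\pi_{J,t}|$ and the weight on the matching one is $|\bar{\pi}_{J,t}-U_J|$, so the per-iteration bound becomes exactly the summand inside the root in~\eqref{eq:mini-batch-ld-mixture}. Taking $\bE_{P_2^t}$, summing over $t\in\mathcal{T}_J(V^T)$ inside the square root, and pulling the constant $\sqrt{2}(b-a)$ out of $\bE_{P_1}$ yields the stated bound.

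I expect the main obstacle to be the justification of the mixture inequality and the appreciation of why it helps. Its value is that, unlike the naive convexity bound $\KL{P}{\sum_k w_k Q_k}\leq\sum_k w_k\KL{P}{Q_k}$ (which here would not even improve on Proposition~\ref{prop:mini-batch-ld}), the log-sum-exp form remains bounded as $\|\zeta_{J,t}\|\to\infty$, curing precisely the linear-in-$\|\zeta_{J,t}\|$ weakness this proposition targets. If a self-contained justification is wanted rather than a citation, I would verify it for the case $U_J=1$ by reducing $dQ_0/dQ_1$ under $\cN(\mu_1,\sigma_t^2 I_d)$ to $\exp(-s^2/2-G)$ with $G\sim\cN(0,s^2)$ and $s^2=\|\mu_1-\mu_0\|^2/\sigma_t^2$, observing that $G\mapsto\log(\pi_{J,t}+\bar{\pi}_{J,t}e^{-s^2/2}e^{-G})$ is convex, and applying Jensen's inequality with $\bE[G]=0$; the $U_J=0$ case is symmetric.
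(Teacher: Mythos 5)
Your proposal is correct and follows essentially the same route as the paper's proof: you start from Proposition~\ref{prop:ext_haghifam_t43} with $R = V^T$, restrict to $t \in \mathcal{T}_J(V^T)$, choose $Q_{\smash{W|W^{t-1},U_{J^c},\tilde{S},V^t}}$ as the same two-component Gaussian mixture weighted by $\pi_{J,t}$ and $\bar{\pi}_{J,t}$, invoke~\citep[Lemma~2]{rodriguez2020upper}, and perform the identical $|U_J-\pi_{J,t}|$ versus $|\bar{\pi}_{J,t}-U_J|$ bookkeeping. Your only addition is the self-contained Jensen/convexity verification of the mixture inequality (which the paper merely cites), and that verification is also correct.
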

\begin{proof}[Sketch of the proof]
We follow the steps of the proof of Proposition~\ref{prop:mini-batch-ld}, but instead of considering $Q_{\smash{W| W^{t-1}, U_{J^c}, \tilde{S}, V^t}}$ as a single Gaussian distribution, we define it as a mixture of two Gaussians.
The responsibility of each mixture component is $\pi_{J,t}$ and $\bar{\pi}_{J,t}$ with either $\nabla_{\theta_{t-1}} \ell \big( W_{\smash{\theta_{t-1}}}, \tilde{Z}_{J+N} \big)$ or $\nabla_{\theta_{t-1}} \ell \big( W_{\smash{\theta_{t-1}}}, \tilde{Z}_{J} \big)$ appearing on their respective mean.
Then, we apply~\citep[Lemma~2]{rodriguez2020upper} to bound the relative entropy $\KL[\big]{P_{\smash{W_t| W_{t-1}, U, \tilde{S}, V^t}}}{Q_{\smash{W| W^{t-1}, U_{J^c},\tilde{S}, V^t}}}$, and the rest follows by algebraic manipulation.
\end{proof}

For example, if $\pi_{J,t} = 1/2$, which is a reasonable value during the first iterations, the bound from Proposition~\ref{prop:mini-batch-ld-mixture} is tighter than the one from Proposition~\ref{prop:mini-batch-ld} for $\| \zeta_{J,t} \| \gtrsim 2.21\sigma_t |V_t|/\eta_t$; this may likely be the case when $\eta_t / \sigma_t \in \Theta(N^{\alpha/2})$, for $\alpha \in (0,1)$, see e.g. \citep[Appendix E]{negrea2019information}.
In fact, for large values of ${\frac{\eta_t^2 \|\zeta_{J,t}\|^2}{2\sigma_t^2|V_t|^2}}$, e.g., when the Lipschitz condition is not met, the term inside the innermost expectation in~\eqref{eq:mini-batch-ld-mixture} tends to $ {-}\log|\bar{\pi}_{J,t}-U_J|$ from below.

Moreover, we note that the only terms that change between~\eqref{eq:mini-batch-ld} and~\eqref{eq:mini-batch-ld-mixture} are the summands inside the square root.
These terms are obtained by considering different data-dependent distributions $Q_{\smash{W_t| W^{t-1}, U_{J^c}, \tilde{S}, V^t}}$, which we can choose arbitrarily at each iteration; hence, we may choose the tightest form for each summand, as it is stated next.

\begin{corollary}
\label{cor:mini-batch-ld}
Under the setting of Proposition~\ref{prop:mini-batch-ld}, if we let
\begin{align*}
f_{J,t} &\triangleq \frac{\eta_t^2 \|\zeta_{J,t}\|^2}{2\sigma_t^2|V_t|^2} (U_J - \pi_{J,t})^2 \textnormal{ and } \\
g_{J,t} &\triangleq -\log \Big(|U_J - \pi_{J,t}| \smash{e^{- \frac{\eta_t^2 \|\zeta_{J,t}\|^2}{2\sigma_t^2|V_t|^2}}} + |\bar{\pi}_{J,t} - U_J| \Big),
\end{align*}
then the expected generalization error of SGLD is bounded as follows: 
\begin{multline}
 \bE_{P_{W,S}} \big[ \textnormal{gen}(W,S) \big] \leq \sqrt{2} (b-a) \\
 \bE_{\smash{P_1}} \bigg[ \sqrt{ \sum\nolimits_{t \in \mathcal{T}_{J}(V^T)} \bE_{\smash{P_2^t}} \Big[ \min \big( f_{J,t},\, g_{J,t} \big) \Big] } \bigg].
\end{multline}
\end{corollary}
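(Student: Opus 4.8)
The plan is to observe that Propositions~\ref{prop:mini-batch-ld} and~\ref{prop:mini-batch-ld-mixture} are two instances of a single argument that differ only in the choice of the auxiliary distribution made at each iteration. Both depart from Proposition~\ref{prop:ext_haghifam_t43} with $R = V^T$ and decompose the divergence through the chain rule
\[
 \KL[\big]{P_{\smash{W|U,\tilde{S},V^T}}}{Q_{\smash{W|U_{J^c},\tilde{S},V^T}}} \leq \sum_{t \in \mathcal{T}_J(V^T)} \bE_{\smash{P_2^t}} \Big[ \KL[\big]{P_{\smash{W_t| W_{t-1}, U, \tilde{S}, V^t}}}{Q_{\smash{W| W^{t-1}, U_{J^c}, \tilde{S}, V^t}}} \Big],
\]
which holds whenever the target is taken in the product (Markov) form $Q_{\smash{W|U_{J^c},\tilde{S},V^T}} = \prod_t Q_{\smash{W_t|W^{t-1},U_{J^c},\tilde{S},V^t}}$, irrespective of how each factor is selected.

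First I would recall that the forward kernel $P_{\smash{W_t| W_{t-1}, U, \tilde{S}, V^t}}$ is Gaussian, so that, with $P$ fixed by the SGLD update, each per-iteration summand is completely determined by the corresponding factor of $Q$. Choosing that factor to be a single Gaussian with the $\pi_{J,t}$-weighted mean yields, by the closed-form Gaussian relative entropy, exactly $f_{J,t}$; choosing it instead to be the two-component Gaussian mixture and invoking~\citep[Lemma~2]{rodriguez2020upper} yields the upper bound $g_{J,t}$. These are precisely the summands appearing in~\eqref{eq:mini-batch-ld} and~\eqref{eq:mini-batch-ld-mixture}, respectively.

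The key step is that the factors of $Q$ may be chosen independently across iterations and, within each iteration, as an arbitrary data-dependent distribution over $W_t$ that is measurable with respect to $(W^{t-1}, U_{J^c}, \tilde{S}, V^t)$. I would therefore define, at each iteration $t$, a hybrid factor that, for every realization of $W^{t-1}$, equals the single-Gaussian factor when $f_{J,t} \leq g_{J,t}$ and the mixture factor otherwise. Since both candidate factors are valid probability distributions and the selection rule is a measurable function of $W^{t-1}$ (because $f_{J,t}$ and $g_{J,t}$ are measurable through $\zeta_{J,t}$ and $\pi_{J,t}$), the resulting factor is again a valid data-dependent distribution, so the chain-rule decomposition still applies. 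With this choice the per-iteration relative entropy at each realization is at most $\min(f_{J,t}, g_{J,t})$, whence $\bE_{\smash{P_2^t}}[\,\cdot\,]$ equals $\bE_{\smash{P_2^t}}[\min(f_{J,t}, g_{J,t})]$, and substituting back into Proposition~\ref{prop:ext_haghifam_t43} gives the claimed bound.

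The main obstacle is precisely this pointwise optimization inside the expectation: one must check that selecting $Q$ per realization of $W^{t-1}$ neither violates the measurability requirement of Proposition~\ref{prop:ext_haghifam_t43} nor the validity of the chain rule, and that the minimum may legitimately be placed inside $\bE_{\smash{P_2^t}}$ rather than outside. This is justified because the upper bound is a sum of \emph{decoupled} per-step expected divergences, each of which can be minimized separately by a measurable selection; no joint constraint couples the choices across iterations or across realizations of $W^{t-1}$, so the per-term infimum is attained simultaneously by a single admissible $Q$.
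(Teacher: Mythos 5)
Your route is the same one the paper takes: the paper's entire justification of Corollary~\ref{cor:mini-batch-ld} is the remark preceding it, namely that \eqref{eq:mini-batch-ld} and \eqref{eq:mini-batch-ld-mixture} differ only in the per-iteration choice of the factor $Q_{\smash{W_t|W^{t-1},U_{J^c},\tilde{S},V^t}}$ (single Gaussian with $\pi_{J,t}$-weighted mean versus two-component mixture), so one may ``choose the tightest form for each summand.'' Your write-up is an explicit reconstruction of exactly that argument, down to the chain-rule decomposition and the two candidate factors.

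However, the step you make explicit---the measurable-selection claim---is precisely where the argument has a genuine gap. You assert that the selector $\bI[f_{J,t}\le g_{J,t}]$ is admissible because $f_{J,t}$ and $g_{J,t}$ ``are measurable through $\zeta_{J,t}$ and $\pi_{J,t}$,'' i.e., through $(W^{t-1},U_{J^c},\tilde{S},V^t,J)$. They are not functions of those variables alone: both depend on $U_J$ itself, via $(U_J-\pi_{J,t})^2$, $|U_J-\pi_{J,t}|$, and $|\bar{\pi}_{J,t}-U_J|$, and the comparison genuinely flips with $U_J$. For instance, with $\pi_{J,t}=0.2$ and $\eta_t^2\|\zeta_{J,t}\|^2/(2\sigma_t^2|V_t|^2)=5$ one gets $f_{J,t}=0.2<g_{J,t}\approx 0.22$ when $U_J=0$, but $f_{J,t}=3.2>g_{J,t}\approx 1.58$ when $U_J=1$. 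Proposition~\ref{prop:ext_haghifam_t43} requires $Q$ to be $(\mathscr{U}^{\otimes(N-M)}\otimes\mathscr{Z}^{\otimes 2N}\otimes\mathscr{R}\otimes\mathscr{J})$-measurable, i.e., independent of $U_J$; this restriction is what makes the CMI machinery sound at all, since a $Q$ allowed to depend on $U_J$ could be taken equal to $P_{\smash{W|U,\tilde{S},V^T}}$, making the relative entropy zero and the bound vacuously false. Consequently your hybrid $Q$ is not an admissible prior, and any admissible selector $s_t$ only delivers the summand $f_{J,t}\bI[s_t{=}1]+g_{J,t}\bI[s_t{=}2]$, which dominates $\min(f_{J,t},g_{J,t})$ and equals it for both values of $U_J$ only when the $\arg\min$ happens not to depend on $U_J$ (e.g., when $\pi_{J,t}=\tfrac{1}{2}$). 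To be fair, the paper's own one-paragraph justification glosses over the same point, so you have faithfully reproduced its reasoning; but as written your proof does not go through, and closing the gap would require either a selector provably independent of $U_J$, or a different device such as the equal-weight mixture prior $\tfrac{1}{2}Q^{(1)}+\tfrac{1}{2}Q^{(2)}$, which gives $\KL{P}{\tfrac{1}{2}Q^{(1)}+\tfrac{1}{2}Q^{(2)}}\le\min\big(\KL{P}{Q^{(1)}},\KL{P}{Q^{(2)}}\big)+\log 2$ and hence the claimed bound only up to an additive $\log 2$ per term.
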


\label{sec:noisy_iterative_algorithms_bounds}

\section{Discussion}
In this paper, we showed how the framework from~\citep{hellstrm2020generalization} allows us to recover the individual sample and subset bounds from~\citep{bu2020tightening} and~\citep{negrea2019information}, as well as to generate their parallels in the randomized subsample setting from \citep{steinke2020reasoning}.
Moreover, we showed why the framework does not allow us to obtain bounds with all the expectations outside the square root, such as the tightest bounds from~\citep{negrea2019information} and~\citep{haghifam2020sharpened}.
Finally, we extended the LD bounds from~\citep{haghifam2020sharpened} to SGLD and we refined them to loss functions with potentially large gradient norms.
\label{sec:discussion}

\bibliographystyle{IEEEtranN}
\bibliography{references}

\newpage
\onecolumn
\appendix
\subsection{Details of Lemma~\ref{lemma:abstraction_hellstrom}}

Before starting with the proof of the Lemma, we introduce the \emph{information density} between two random variables $X$ and $Y$, which is defined as
\[
\iota(X;Y) \triangleq \log \frac{dP_{X,Y}}{d(P_X \times P_Y)},
\]
and the \emph{conditional information density} between two random variables $X$ and $Y$ given a random variable $Z$, which is defined as 
\[
\iota(X;Y|Z) \triangleq \log \frac{dP_{X,Y|Z}}{d(P_{X|Z} \times P_{Y|Z})}.
\]

Note that if we consider two random variables $X$ and $Y|Z$, where $Y|Z$ is a random variable described by the conditional probability distribution $P_{Y|Z}$, and where $X$ is independent of $Z$, then their information density $\iota(X;(Y|Z))$ is equal to the conditional information density between $X$ and $Y$ given $Z$, $\iota(X;Y|Z)$, i.e.,
\[
\log \frac{dP_{X,Y|Z}}{d(P_{Y|Z}\times P_X)}.
\]

\begin{proof}[Proof of Lemma~\ref{lemma:abstraction_hellstrom}]
If (i) holds, i.e., $f(X,y)$ is $\sigma$-subgaussian under $P_X$ for all $y \in \mathcal{Y}$ and we know that $\bE_{P_X} \big[ f(X,y) \big] = 0$ for all $y \in \mathcal{Y}$, then we have that 
\begin{equation*}
    \bE_{P_X} \big[ \exp \big( \lambda f(X,y) \big) \big] \leq \exp \left( \frac{\lambda^2 \sigma^2}{2} \right),
\end{equation*}
for all $y \in \mathcal{Y}$.
Then, we can rearrange the terms and take the expectation with respect to $P_Y$ to obtain
\begin{equation*}
    \bE_{P_Y \times P_X} \left[ \exp \left(\lambda f(X,Y) - \frac{\lambda^2 \sigma^2}{2} \right) \right] \leq 1,
\end{equation*}
which is the same expression we would obtain with the subgaussianity and zero-mean conditions imposed in (ii).
Now, let us consider the support of $P_{X,Y}$ to be $A = \textnormal{supp}(P_{X,Y})$.
Then, it follows that
\begin{equation*}
    \bE_{P_Y \times P_X} \left[ \chi_{A} \exp \left(\lambda f(X,Y) - \frac{\lambda^2 \sigma^2}{2} \right) \right] \leq 1,
\end{equation*}
where $\chi_A$ is the characteristic function of the collection of sets $A$.
Finally, if we employ~\citep[Proposition 17.1]{polyanskiy2014lecture} we have that
\begin{equation*}
    \bE_{P_{X,Y}} \left[ \exp \left(\lambda f(X,Y) - \frac{\lambda^2 \sigma^2}{2} - \iota(X;Y) \right) \right] \leq 1.
\end{equation*}

Now, we can insert the expectation over $P_{X,Y}$ inside the exponential by means of Jensen's inequality to obtain
\begin{equation*}
    \exp \left( \lambda\, \bE_{P_{X,Y}} [ f(X,Y) ] - \frac{\lambda^2 \sigma^2}{2} - \KL{P_{X,Y}}{P_Y \times P_X} \right)  \leq 1.
\end{equation*}
Then, if we optimize for $\lambda$ and rearrange the terms we obtain that
\begin{equation*}
    \big| \bE_{P_{X,Y}}[f(X,Y)] \big| \leq \sqrt{2 \sigma^2 \KL{P_{X,Y}}{P_Y \times P_X}}.
\end{equation*}
Finally, we apply~\citep[Corollary 3.1]{polyanskiy2014lecture} to substitute $P_{Y}$ for any distribution $Q_Y$ over $(\mathcal{Y},\mathscr{Y})$ and complete the proof. 
\end{proof}

Note that we can apply Lemma~\ref{lemma:abstraction_hellstrom} to a function $f: \mathcal{X} \times (\mathcal{Y} \times \mathcal{Z}) \rightarrow \bR$ and consider the random variables $X$, which is described by $P_{X}$ and is independent of $Z$, and $Y|Z$, which is described by $P_{Y|Z}$.
Then, if $f$ is either:
\begin{enumerate}[label=(\roman*)]
    \item zero-mean and $\sigma$-subgaussian under $P_X$ for all $y \in \mathcal{Y}$, or
    \item zero-mean and $\sigma$-subgaussian under $P_{Y|Z} \times P_{X}$,
\end{enumerate}
then, for all $\mathscr{Z}$-measurable distributions $Q_{Y|Z}$ over $(\mathcal{Y},\mathscr{Y})$, we have that
\begin{equation*}
    \big| \bE_{P_{X,Y|Z}}[f(X,Y,Z)] \big| \leq \sqrt{2 \sigma^2 \KL{P_{X,Y|Z}}{Q_{Y|Z} \times P_X}}.
\end{equation*}

\subsection{Proof of Proposition~\ref{prop:bu_p1_recovery}}

Consider the setting of Lemma~\ref{lemma:abstraction_hellstrom} and let $X=Z_i$, $Y = W$, and $f(X,Y) = \textnormal{gen}_i(W,Z_i) \triangleq \bE_{P_Z}[\ell(w,Z)] - \ell(w,Z_i)$.
Then, since $\textnormal{gen}_i(W,Z_i)$ is either (i) zero-mean and $\sigma$-subgaussian under $P_{Z_i}$ for all $w \in \cW$ or (ii) zero-mean and $\sigma$-subgaussian under $P_W \times P_{Z_i}$, we have that 
\begin{equation*}
    \big| \bE_{P_{W,Z_i}} \big[ \textnormal{gen}_i(W,Z_i) \big] \big| \leq \sqrt{2\sigma^2 I(W;Z_i)},
\end{equation*}
where we let $Q_Y = P_Y = P_W$.

Finally, if we note that $\bE_{P_{W,S}}[\ell(W,Z_i)] = \bE_{P_{W,Z_i}}[\ell(W,Z_i)]$ and we use the triangle inequality we get the desired result as follows:
\begin{align*}
\pushQED{\qed} 
    \big| \bE_{P_{W,S}} \big[ \textnormal{gen}(W,S) \big] \big| &\leq 
    \left| \bE_{P_W \times P_Z}[\ell(W,Z)] -  \bE_{P_{W,S}}\left[ \frac{1}{N} \sum_{i=1}^N \ell(W,Z_i) \right] \right| \nonumber \\ 
    &= \frac{1}{N} \left| \sum_{i=1}^N \big( \bE_{P_W \times P_Z}[ \ell(W,Z) ] -  \bE_{P_{W,Z_i}}[ \ell(W,Z_i) ] \big) \right| \nonumber \\ 
    & \leq \frac{1}{N} \sum_{i=1}^N \left| \bE_{P_{W,Z_i}} \big[ \textnormal{gen}_i(W,Z_i) \big] \right| \nonumber \\
    &\leq \frac{1}{N} \sum_{i=1}^N \sqrt{2 \sigma^2 I(W;Z_i)}. \qedhere
\popQED
\end{align*}

\subsection{Proof of Proposition~\ref{prop:negrea_t24_recovery}}

Consider the setting of Lemma~\ref{lemma:abstraction_hellstrom} and, for a fixed $J$, let $X = S_J$, $Y = (W|S_{J^c},R)$, and $f(X,Y) = \textnormal{gen}_J(W,S_J) \triangleq \bE_{P_Z}[\ell(w,Z)] - \frac{1}{M} \sum_{i \in J} \ell(w,Z_i)$.
Then, if (i) $\ell(w,Z)$ is $\sigma$-subgaussian under $P_Z$ for all $w \in \cW$, we have that $\textnormal{gen}_J(W,S_J)$ is zero-mean and $\sigma/\sqrt{M}$-subgaussian under $P_{S_J}$; and if (ii) $\ell(W,Z)$ is $\sigma$-subgaussian under $P_{W|S_{J^c},R} \times P_Z$, then $\textnormal{gen}_J(W,S_J)$ is zero-mean and $\sigma/\sqrt{M}$-subgaussian under $P_{W|S_{J^c},R} \times P_{S_J}$.
Therefore, we have that,
%
%
\begin{equation}
    \big| \bE_{P_{W,S_J|S_{J^c},R }} \big[ \textnormal{gen}_J(W,S_J) \big] \big| \leq  \sqrt{\frac{2\sigma^2}{M} \KL[\big]{P_{W,S_J |S_{J^c},R }}{Q_{W|S_{J^c},R} \times P_{S_J}}}, 
    \label{eq:negrea_reconstruction_abs_value_app_1}
\end{equation}
where $Q_{W|S_{J^c},R}$ is any $(\mathscr{Z}^{\otimes (N-M)} \otimes \mathscr{R} \otimes \mathscr{J})$-measurable distribution over $(\cW,\mathscr{W})$. Then, we may use Jensen's inequality to $|\bE_{P_{W,S,R,J}}[\textnormal{gen}_J(W,S_J)]|$ to obtain that
\begin{equation}
    \big| \bE_{P_{W,S,R,J}} \big[ \textnormal{gen}_J(W,S_J) \big] \big| \leq \bE_{P_{J,S_{J^c},R}} \big[ \big| \bE_{P_{W,S_J|S_{J^c},R }} \big[ \textnormal{gen}_J(W,S_J) \big] \big| \big].
    \label{eq:negrea_reconstruction_abs_value_app_2}
\end{equation}
Finally, we may combine~\eqref{eq:negrea_reconstruction_abs_value_app_1} and~\eqref{eq:negrea_reconstruction_abs_value_app_2} with the fact that $\bE_{P_{W,S,R,J}}[\textnormal{gen}_J(W,S_J)] = \bE_{P_{W,S}}[\textnormal{gen}(W,S)]$ to obtain~\eqref{eq:negrea_t24_recovery}.
\hfill\qedsymbol

\subsection{Proof of Proposition~\ref{prop:icmi}}

Consider the setting of Lemma~\ref{lemma:abstraction_hellstrom} and let $X = U_i$, $Y = (W,\tilde{Z}_i,\tilde{Z}_{i+N})$, and $f(X,Y) = \widehat{\textnormal{gen}}_i(W,\tilde{Z}_i,\tilde{Z}_{i+N}, U_i) \triangleq \ell \big( W,\tilde{Z}_{i + (1-U_i)N} \big) - \ell \big( W,\tilde{Z}_{i+U_iN} \big)$.
Then, since $\ell(W,Z)$ is bounded in $[a,b]$, we also know that $\widehat{\textnormal{gen}}_i(W,\tilde{Z}_i,\tilde{Z}_{i+N}, U_i)$ is bounded in $[a-b,b-a]$, and thus it is $(b-a)$-subgaussian under any source of randomness.
Furthermore, since it has mean zero under $P_{U_i}$, we have that
\begin{equation*}
    \big| \bE_{P_{\smash{W,\tilde{Z}_i,\tilde{Z}_{i+N},U_i}}} \big[ \widehat{\textnormal{gen}}_i(W,\tilde{Z}_i,\tilde{Z}_{i+N}, U_i) \big] \big| \leq \sqrt{2(b-a)^2 I(W;U_i|\tilde{Z}_i,\tilde{Z}_{i+N})},
\end{equation*}
where we let $Q_{\smash{W,\tilde{Z}_i,\tilde{Z}_{i+N}}} = P_{\smash{W,\tilde{Z}_i,\tilde{Z}_{i+N}}}$.
Finally, if we note that $\bE_{P_{\smash{W,\tilde{S},U}}} \big[ \widehat{\textnormal{gen}} (W,\tilde{S},U) \big] = \bE_{P_{W,S}} \big[  \textnormal{gen}(W,S) \big]$, that
\begin{equation*}
    \bE_{P_{\smash{W,\tilde{S},U}}} \big[  \widehat{\textnormal{gen}}(W,\tilde{S},U) \big] 
    = \frac{1}{N} \sum_{i=1}^N \bE_{P_{\smash{W,\tilde{Z}_i,\tilde{Z}_{i+N},U_i}}}  \big[ \widehat{\textnormal{gen}}_i(W,\tilde{Z}_i,\tilde{Z}_{i+N}, U_i) \big],
\end{equation*}
and we use the fact that $|\sum_{i=1}^N x_i| \leq \sum_{i=1}^N |x_i|$, we obtain the result from~\eqref{eq:icmi_result}.
\hfill\qedsymbol

\subsection{Proof of Proposition~\ref{prop:cond_version_negrea_t24}}

Consider the setting of Lemma~\ref{lemma:abstraction_hellstrom} and, for a fixed value of $J$, let $X=U_J$, $Y=(W|\tilde{S},U_{J^c},R)$, and $f(X,Y) = \widehat{\textnormal{gen}}_J(W,U_J,\tilde{S}_J, \tilde{S}_{J+N}) \triangleq \frac{1}{M} \sum_{i \in J} \ell \big( W,\tilde{Z}_{i+(1-U_i)N} \big) - \ell \big( W,\tilde{Z}_{i+U_iN} \big)$.
Then, since $\ell(W,Z)$ is bounded in $[a,b]$, we also know that $\ell \big( W,\tilde{Z}_{i+(1-U_i)N} \big) - \ell \big( W,\tilde{Z}_{i+U_iN} \big)$ is bounded in $[a-b,b-a]$, and thus it is $(b-a)$-subgaussian under $P_{U_J}$.
Hence, since all the terms in the sum of $\widehat{\textnormal{gen}}_J(W,U_J,\tilde{S}_J, \tilde{S}_{J+N})$ are independent of each other under $P_{U_J}$, we know that $\widehat{\textnormal{gen}}_J(W,U_J,\tilde{S}_J, \tilde{S}_{J+N})$ is $(b-a)/\sqrt{M}$-subgaussian under $P_{U_J}$.
Furthermore, since it has zero mean under $P_{U_J}$, we have that
%
%
\begin{equation}
    \big| \bE_{P_{\smash{W,U_J|\tilde{S},U_{J^c},R}}} \big[ \widehat{\textnormal{gen}}_J(W,U_J,\tilde{S}_J, \tilde{S}_{J+N}) \big]  \big| \leq \sqrt{\frac{2(b-a)^2}{M} \KL[\big]{P_{\smash{W,U_J|U_{J^c},\tilde{S},R}}}{Q_{\smash{W|U_{J^c},\tilde{S},R}} \times P_{U_J}}},
    \label{eq:cond_version_negrea_t24_abs_value_app_1}
\end{equation}
where $Q_{\smash{W|U_{J^c},\tilde{S},R}}$ is any $(\mathscr{U}^{\otimes (N - M)} \otimes \mathscr{Z}^{\otimes 2N} \otimes \mathscr{R} \otimes \mathscr{J})$-measurable distribution over $(\cW,\mathscr{W})$.
Then, we may use Jensen's inequality to $|\bE_{P_{W,\tilde{S},U,R,J}}[\widehat{\textnormal{gen}}_J (W, U_J, \tilde{S}_J, \tilde{S}_{J+N})]|$ to obtain that
\begin{equation}
    \big| \bE_{P_{W,\tilde{S},U,R,J}}\big[\widehat{\textnormal{gen}}_J (W,U_J,\tilde{S}_J,\tilde{S}_{J+N})\big] \big| \leq \bE_{P_{\smash{J,\tilde{S},U_{J^{c}},R}}} \big[ \big| \bE_{P_{\smash{W,U_J|\tilde{S},U_{J^c},R}}} \big[ \widehat{\textnormal{gen}}_J(W,U_J,\tilde{S}_J, \tilde{S}_{J+N}) \big]  \big| \big].
    \label{eq:cond_version_negrea_t24_abs_value_app_2}
\end{equation}
Finally, we may combine~\eqref{eq:cond_version_negrea_t24_abs_value_app_1} and~\eqref{eq:cond_version_negrea_t24_abs_value_app_2} with the fact that $\bE_{P_{\smash{W, \tilde{S}, U, R, J}}}[\widehat{\textnormal{gen}}_J(W,U_J,\tilde{S}_J,\tilde{S}_{J+N})] = \bE_{P_{W,S}}[\textnormal{gen}(W,S)]$ to obtain~\eqref{eq:cond_version_negrea_t24}.
\hfill\qedsymbol

\subsection{Proof of Proposition~\ref{prop:mini-batch-ld}}

First of all, we note that 
\[
\KL[\big]{P_{\smash{W|U,\tilde{S},V^T}}}{Q_{\smash{W|U_{J^c},\tilde{S},V^T}}} \leq \sum_{t=1}^T \bE_{P_{\smash{W^{t-1}|U, \tilde{S},V^{t-1}}}} \big[ \KL[\big]{P_{\smash{W_t|W_{t-1},U,\tilde{S},V^t}}}{Q_{\smash{W|W^{t-1}, U_{J^c},\tilde{S},V^t}}},
\]
since we first substitute $W_T$ by $W^T$ using the monotonicity of the relative entropy, and then we employ the chain rule of the relative entropy~\citep[Theorem 2.2]{polyanskiy2014lecture} and the Markov properties of the problem.
Note that we only used the Markov property (with respect to the dependency with previous hypotheses) of the problem for the distribution $P_{\smash{W^T|U,\tilde{S},V^T}}$ and not for $Q_{\smash{W^T|U_{J^c},\tilde{S},V^T}}$, since we will benefit from the hypotheses' trajectory when estimating $U_J$ with $\pi_{J,t}$.
After that, we note that at the iterations $t^*$ where the batch $V_{t^*}$ did not include the sample $J$, the hypothesis $W_{{t^*}}$ is completely determined by $W_{{t^* - 1}}, U_{J^c}, \tilde{S},$ and $V^{t^*}$.
Hence, as in~\citep{bu2020tightening}, we can drop these terms in the summation by considering $Q_{\smash{W|W^{t^*-1}, U_{J^c}, \tilde{S}, V^t}} = P_{\smash{W_t|W_{t^*-1}, U_{J^c}, \tilde{S}, V^t}}$. That is,
\begin{multline*}
 \sum_{t=1}^T \bE_{P_{\smash{W^{t-1}|U, \tilde{S}, V^{t-1}}}} \big[ \KL[\big]{P_{\smash{W_t|W_{t-1}, U, \tilde{S}, V^t}}}{Q_{\smash{W|W^{t-1}, U_{J^c}, \tilde{S}, V^t}}} \big] \\
 = \sum\nolimits_{t \in \mathcal{T}_J(V^T)} \bE_{P_{\smash{W^{t-1}|U, \tilde{S}, V^{t-1}}}} \big[ \KL[\big]{P_{\smash{W_t|W_{t-1}, U, \tilde{S}, V^t}}}{Q_{\smash{W|W^{t-1}, U_{J^c}, \tilde{S}, V^t}}} \big],
\end{multline*}
where $\mathcal{T}_J(V^T)$ is the set of iterations for which sample $J$ was included in the batches from $V^T$.

Then, we can evaluate the terms $\KL[\big]{P_{\smash{W_t|W_{t-1}, U, \tilde{S}, V^t}}}{Q_{\smash{W|W^{t-1}, U_{J^c}, \tilde{S}, V^t}}}$, where similarly to~\citep{haghifam2020sharpened}, we note that $P_{\smash{W_t|W_{t-1}, U, \tilde{S}, V^t}} = \mathcal{N}(\mu_{J,t}, \sigma_t I_d)$, where 
\begin{equation*}
	\mu_{J,t} = \theta_{t-1} - \frac{\eta_t}{|V_t|} \left( (|V_{t}|-1) \nabla_{\theta_{t-1}} L_{S_{V_t \setminus J}}(W_{\theta_{t-1}}) +  (1-U_J) \nabla_{\theta_{t-1}} \ell(W_{\theta_{t-1}},\tilde{Z}_{J}) + U_J  \nabla_{\theta_{t-1}} \ell(W_{\theta_{t-1}},\tilde{Z}_{J+N}) \right).
\end{equation*}
We may also define $Q_{\smash{W|W^{t-1}, U_{J^c}, \tilde{S}, V^t}}$ as a Gaussian $\mathcal{N}(\mu'_{J,t},\sigma_t I_d)$, but since we do not know $U_J$, we consider a weighted average of the gradients $ \nabla_{\theta_{t-1}} \ell(W_{\theta_{t-1}},\tilde{Z}_{J})$ and $\nabla_{\theta_{t-1}} \ell(W_{\theta_{t-1}},\tilde{Z}_{J+N})$ with an estimate $\pi_{J,t}$ of the probability of $U_{J} = 1$ based on $W^{t-1}, \tilde{S}$, $U_{J^c}$, and $V^t$. That is, 
\begin{equation*}
	\mu'_{J,t} = \theta_{t-1} - \frac{\eta_t}{|V_t|} \left( (|V_{t}|-1) \nabla_{\theta_{t-1}} L_{S_{V_t} \setminus J}(W_{\theta_{t-1}}) +  (1-\pi_{J,t}) \nabla_{\theta_{t-1}} \ell(W_{\theta_{t-1}},\tilde{Z}_{J}) + \pi_{J,t} \nabla_{\theta_{t-1}} \ell(W_{\theta_{t-1}},\tilde{Z}_{J+N}) \right).
\end{equation*}

Finally, similarly to~\citep{haghifam2020sharpened}, we can use the analytical expression of the divergence between two Gaussians to obtain that
\begin{equation*}
	\KL[\big]{\mathcal{N}(\mu_{J,t},\sigma_t I_d)}{\mathcal{N}(\mu'_{J,t},\sigma_t I_d)} = (U_J - \pi_{J,t})^2 \frac{\eta_t^2 \|\zeta_{J,t}\|^2}{2 \sigma^2 |V_t|^2},
\end{equation*}
where $\zeta_{J,t} \triangleq \nabla_{\theta_{t-1}} \ell(W_{\theta_{t-1}},\tilde{Z}_{J}) - \nabla_{\theta_{t-1}} \ell(W_{\theta_{t-1}},\tilde{Z}_{J+N})$ is the two-sample incoherence at iteration $t$.
\hfill\qedsymbol

\subsection{Proof of Proposition~\ref{prop:mini-batch-ld-mixture}}

The proof follows that of Proposition~\ref{prop:mini-batch-ld}, except that we construct $Q_{\smash{W|W^{t-1}, U_{J^c}, \tilde{S}, V^t}}$ as a mixture of two Gaussians weighted by an estimate $\pi_{J,t}$ of the probability that $U_J = 1$.
That is, $(1-\pi_{J,t}) \mathcal{N}(\mu^0_{J,t},\sigma_t I_d) + \pi_{J,t} \mathcal{N}(\mu^1_{J,t},\sigma_t I_d)$, where
\begin{align*}
\mu^0_{J,t} &= \theta_{t-1} - \frac{\eta_t}{|V_t|} \left( (|V_{t}|-1) \nabla_{\theta_{t-1}} L_{S_{V_t} \setminus J}(W_{\theta_{t-1}}) +  \nabla_{\theta_{t-1}} \ell(W_{\theta_{t-1}},\tilde{Z}_{J}) \right),  \\ 
\mu^1_{J,t} &= \theta_{t-1} - \frac{\eta_t}{|V_t|} \left( (|V_{t}|-1) \nabla_{\theta_{t-1}} L_{S_{V_t} \setminus J}(W_{\theta_{t-1}}) +  \nabla_{\theta_{t-1}} \ell(W_{\theta_{t-1}},\tilde{Z}_{J+N}) \right).
\end{align*}
Then, we apply~\citep[Lemma 2]{rodriguez2020upper} to bound the relative entropy $\KL[\big]{P_{\smash{W_t|W_{t-1}, U, \tilde{S}, V^t}}}{Q_{\smash{W|W^{t-1}, U_{J^c}, \tilde{S}, V^t}}}$, where we obtain that 
\begin{multline}
	\KL[\big]{\mathcal{N}(\mu_{J,t},\sigma_t I_d)}{(1-\pi_{J,t})  \mathcal{N}(\mu^0_{J,t},\sigma_t I_d) + \pi_{J,t} \mathcal{N}(\mu^1_{J,t},\sigma_t I_d)} \\ 
	\leq - \log\left( (1-\pi_{J,t}) \exp\left(- U_J^2 \frac{\eta_t^2 \|\zeta_{J,t}\|^2}{2\sigma_t^2|V_t|^2}\right) + \pi_{J,t} \exp \left(- (1-U_J)^2 \frac{\eta_t^2 \|\zeta_{J,t}\|^2}{2\sigma_t^2|V_t|^2} \right) \right). \qquad \qquad \qquad
	\label{eq:mini-batch-ld-mixture-intermediate}
\end{multline}

Finally, we can make~\eqref{eq:mini-batch-ld-mixture-intermediate} more compact by noting that the equation reduces to
\[
	- \log\left( (1-\pi_{J,t}) + \pi_{J,t} \exp \left(- \frac{\eta_t^2 \|\zeta_{J,t}\|^2}{2\sigma_t^2|V_t|^2} \right) \right)
\]
when $U_J = 0$, and to
\[
- \log\left( (1-\pi_{J,t}) \exp\left(- \frac{\eta_t^2 \|\zeta_{J,t}\|^2}{2\sigma_t^2|V_t|^2}\right) + \pi_{J,t} \right)
\]
when $U_J = 1$.
Hence, since $|U_J - \pi_{J,t}|$ is equal to $\pi_{J,t}$ when $U_J = 0$, and $(1-\pi_{J,t})$ when $U_J = 1$, we may write the right hand side of~\eqref{eq:mini-batch-ld-mixture-intermediate} as 
\begin{equation*}
-\log \left(|U_J - \pi_{J,t}| \exp \left(- \frac{\eta_t^2 \|\zeta_{J,t} \|^2}{2\sigma_t^2|V_t|^2} \right) + |1 - U_J - \pi_{J,t}| \right),
\end{equation*}
from which we obtain~\eqref{eq:mini-batch-ld-mixture}.
\hfill\qedsymbol

\subsection{{Reduction of \texorpdfstring{\citep[Theorem 3.1]{negrea2019information}}{Negrea et al. [Theorem 3.1]}, under the conditions of Corollary~\ref{cor:mini-batch-ld-lipschitz}}}

If we consider $|J| = 1$ and a bounded loss function $\ell: \cW \times \cZ \rightarrow [a,b]$, we have that~\citep[Theorem 3.1]{negrea2019information} is
\begin{equation*}
    \bE_{P_{W,S}} \big[ \textnormal{gen}(W,S) \big] \leq \frac{(b-a)}{2\sqrt{2}} \,
 \bE_{P'_1} \left[\sqrt{\sum_{\smash{t = 1}}^T \bE_{{P'_2}^t} \bigg[ \frac{\eta_t^2}{\sigma_t^2} \| \xi_t \|^2 \bigg]} \right],
\end{equation*}
where $P'_1 = P_{J,S,V^T}$, ${P'_2}^t = P_{W^{t-1}|S,V^{t-1}}$, and $\xi_t$ is defined as
\[
\xi_t \triangleq \frac{|V_t| - |S_{J^c} \cap S_{V_t}|}{|V_t|} \left( \nabla_{\theta_{t-1}} L_{S_1}(W_{\theta_{t-1}}) - \nabla_{\theta_{t-1}} L_{S_2}(W_{\theta_{t-1}})\right), 
\]
where $S_1 = S_{V_t} \setminus S_{J^c}$ and $S_2 = S_{J^c}$.
Note that if the loss is $L$-Lipschitz and we fix $|V_t| = K$, we have that if $J \in V_t$ then $\| \xi_t \| = \frac{2L}{K}$ and otherwise $\| \xi_t \| = 0$.
Therefore, we can eliminate from the summation all the iterations for which $J$ does not belong to the batch $V_t$, which leaves us with 

\begin{equation*}
 \bE_{P_{W,S}} \big[ \textnormal{gen}(W,S) \big] \leq \frac{L(b-a)}{\sqrt{2}K} \,
 \bE_{P_{\smash{J, V^T}}} \left[\sqrt{\sum_{\smash{t \in \mathcal{T}_{J}(V^T)}}  \frac{\eta_t^2}{\sigma_t^2} } \right] .
\end{equation*}

\subsection{Derivation of the \texorpdfstring{$U_J = 1$}{U=1} estimate}

Similarly to~\citep{haghifam2020sharpened}, we consider the estimation of $U_J = 1$ with the knowledge of $W^{t-1}, U_{J^c}, \tilde{S}$, and $V^T$.
In order to do so, we consider a function $\phi: \bR \rightarrow [0,1]$ of the log-likelihood ratio of the two hypotheses $\mathbb{H}_1 \triangleq  (U_J = 1)$ and $\mathbb{H}_0 \triangleq  (U_J = 0)$.
That is, 
\begin{equation*}
\pi_{J,t} \triangleq  \phi \left( \log \frac{P_{\smash{U_J| W^{t-1}, \tilde{S}, U_{J^c},V^{t-1}}}(1)}{P_{\smash{U_J| W^{t-1}, \tilde{S}, U_{J^c},V^{t-1}}}(0)} \right),
\end{equation*}
where we use $V^{t-1}$ instead of $V^T$ in the probability distribution due to the Markov properties of the problem. 

Then, since the priors of $U_J$ are equal to $1/2$, i.e., $P_{U_J}(0) = P_{U_J}(1) = 1/2$, the hypotheses $W_t$ are completely defined by the parameters $\theta_t \in \bR^d$, and given that the stochasticity considered is Gaussian, we can assume that the probability density functions $f(W^{t-1}|\tilde{S},U_J^c,V^{t-1},U_J=u)$ exist and that 
\begin{equation*}
    \phi \left( \log \frac{P_{\smash{U_J| W^{t-1}, \tilde{S}, U_{J^c},V^{t-1}}}(1)}{P_{\smash{U_J| W^{t-1}, \tilde{S}, U_{J^c},V^{t-1}}}(0)} \right) = \phi \left( \log \frac{f(W^{t-1}|\tilde{S},U_J^c,V^{t-1},U_J=1)}{f(W^{t-1}|\tilde{S},U_J^c,V^{t-1},U_J=0)} \right),
\end{equation*}
where the density functions are
\begin{align*}
     f(W^{t-1}|&\tilde{S},U_J^c,V^{t-1},U_J=u) = \nonumber \\
     &\prod_{t = 1}^T \left(\frac{1}{2\pi \sigma_t}\right)^{\frac{d}{2}} \exp \left( - \frac{1}{2\sigma_t^2} \left \| \theta_t - \theta_{t-1} + \eta_t \frac{|V_t|-1}{|V_t|} \nabla_{\theta_{t-1}} L_{V_t \setminus J}(W_{\theta_{t-1}}) + \frac{\eta_t}{|V_t|} \nabla_{\theta_{t-1}} \ell(W_{\theta_{t-1}},\tilde{Z}_{J+uN}) \right \|^2\right).
\end{align*}

Therefore, the final estimation is
\begin{equation*}
 \pi_{J,t} \triangleq  \phi \left( \sum\nolimits_{t \in \mathcal{T}_J(V^t)} (Y_{J,t,0} - Y_{J,t,1}) \right),
\end{equation*}
where 
\begin{equation*}
    Y_{J,t,u} = \frac{1}{2\sigma_t^2} \left \| \theta_t - \theta_{t-1} + \eta_t \frac{|V_t|-1}{|V_t|} \nabla_{\theta_{t-1}} L_{V_t \setminus J}(W_{\theta_{t-1}}) + \eta_t \frac{1}{|V_t|} \nabla_{\theta_{t-1}} \ell(W_{\theta_{t-1}},\tilde{Z}_{J+uN}) \right \|^2,
\end{equation*}
and where the iterations $t^*$ where $J$ did not belong to $V_t$ are dropped since $Y_{J,t^*,0} = Y_{J,t^*,1}$.


\subsection{Comparison of mutual-information--based bounds on the randomized subsample setting}
\label{app:individual_cmi_tightest}

It is known that in the standard setting, the expected generalization error bounds are ordered, from tightest to loosest, as follows: \citep[Proposition~1]{bu2020tightening}, \citep[Theorem~1]{xu2017information}, and \citep[Theorem~2.4, after further applying Jensen's inequality]{negrea2019information}. This ordering follows from the fact that:
\begin{equation*}
    \sum_{i=1}^N I(W;Z_i) \leq I(W;S) = \sum_{i=1}^N I(W;Z_i|S^{i-1}) \leq \sum_{i=1}^N I(W;Z_i|S^{-i}),
\end{equation*}
where $S^{-i} = S \setminus Z_i$ and $S^{i} = (S_1, \ldots, S_i)$.
The proof for the first inequality is presented in~\citep[Proposition~2]{bu2020tightening} and the second inequality can be derived similarly.

\begin{lemma}
In the standard setting, we have that $I(W;Z_i|S^{i-1}) \leq I(W;Z_i|S^{-i})$ for all $i \in [N]$.
\end{lemma}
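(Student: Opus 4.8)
The plan is to reduce the claimed inequality to the nonnegativity of a single conditional mutual information, exploiting the mutual independence of the samples. I write $B \triangleq (Z_{i+1}, \ldots, Z_N)$, so that $S^{-i} = (S^{i-1}, B)$ and the claim becomes $I(W; Z_i \mid S^{i-1}) \leq I(W; Z_i \mid S^{i-1}, B)$. The only structural fact I would extract from the standard setting is that the samples are i.i.d.\ and hence mutually independent, so $Z_i$ is independent of $(S^{i-1}, B)$ and, in particular, $I(Z_i; B \mid S^{i-1}) = 0$.

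First I would expand $I(W; (Z_i, B) \mid S^{i-1})$ by the chain rule in both orders, obtaining
\begin{equation*}
I(W; Z_i \mid S^{i-1}) + I(W; B \mid Z_i, S^{i-1}) = I(W; B \mid S^{i-1}) + I(W; Z_i \mid B, S^{i-1}).
\end{equation*}
Rearranging isolates the quantity of interest,
\begin{equation*}
I(W; Z_i \mid S^{-i}) - I(W; Z_i \mid S^{i-1}) = I(W; B \mid Z_i, S^{i-1}) - I(W; B \mid S^{i-1}).
\end{equation*}
Next I would expand $I((W, Z_i); B \mid S^{i-1})$ by the chain rule in both orders and cancel the term $I(Z_i; B \mid S^{i-1}) = 0$ to get
\begin{equation*}
I(W; B \mid Z_i, S^{i-1}) - I(W; B \mid S^{i-1}) = I(Z_i; B \mid W, S^{i-1}).
\end{equation*}
Chaining the last two displays yields $I(W; Z_i \mid S^{-i}) - I(W; Z_i \mid S^{i-1}) = I(Z_i; B \mid W, S^{i-1}) \geq 0$, which is exactly the lemma.

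The argument is short, and I do not anticipate a genuine obstacle; the one point that requires care is the correct invocation of conditional independence, namely translating the \emph{unconditional} independence of $Z_i$ from $(S^{i-1}, B)$ into $I(Z_i; B \mid S^{i-1}) = 0$, and checking that conditioning on the past $S^{i-1}$ does not reintroduce dependence between $Z_i$ and the future $B$ (it does not, by the i.i.d.\ assumption). It is worth remarking that the resulting identity exposes the gap as $I(Z_i; B \mid W, S^{i-1})$: although $Z_i$ and the remaining samples are a priori independent, observing the learned hypothesis $W$ can couple them, and this induced dependence is precisely the slack between the two bounds. This also makes transparent that the inequality is the same phenomenon as the first inequality in the displayed chain (the special case $S^{i-1} = \emptyset$), as anticipated by the remark that it ``can be derived similarly.''
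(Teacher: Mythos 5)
Your proof is correct and is essentially the paper's own argument: both are pure chain-rule manipulations exploiting the independence of the samples, and both identify the slack as exactly the same nonnegative term $I(Z_i; S_{i+1}^N \mid W, S^{i-1})$ (your $I(Z_i; B \mid W, S^{i-1})$). The only cosmetic difference is that you expand two mutual informations conditioned on $S^{i-1}$ in both orders, whereas the paper expands the single unconditional quantity $I((W,S^{-i});Z_i)$ in two chain-rule orders, using $I(Z_i;S^{-i}) = I(Z_i;S^{i-1}) = 0$ where you use $I(Z_i;B \mid S^{i-1}) = 0$.
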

\begin{proof}
If we exploit the fact that $Z_i$ are independent we have that:
\begin{align*}
    I(W;Z_i|S^{-i}) &= I(W;Z_i|S^{-i}) + I(Z_i;S^{-i}) \\
    &\stack{a}{=} I((W,S^{-i});Z_i) \\ 
    &\stack{b}{=} I(S^{i-1};Z_i) + I(W;Z_i|S^{i-1}) + I(S_{i+1}^N;Z_i|W,S^{i-1}) \\
    &\stack{c}{\geq} I(W;Z_i|S^{i-1}),
\end{align*}
where for $i < j$ we have that $S_{i}^{j} = (Z_i, \ldots, Z_j)$  and for $i = j$ we have that $S_{i}^{j} = Z_i$. Then, in (a) and (b) we used the chain rule of the mutual information~\citep[Theorem~2.5]{polyanskiy2014lecture} and in (c) the fact that the mutual information is non-negative.
\end{proof}

We can use the same arguments in the bounds for the randomized subsample setting, ordering them as follows: \citep[Theorem~3.4, after further applying Jensen's inequality]{haghifam2020sharpened}, \citep[Theorem~5.1]{steinke2020reasoning}, and \citep[Theorem~3.7, after further applying Jensen's inequality]{haghifam2020sharpened}. This ordering, as before, follows since
\[
\sum_{i=1}^N I(W;U_i|\tilde{S}) \leq I(W;U|\tilde{S}) = \sum_{i=1}^N I(W;U_i|\tilde{S},U^{i-1}) \leq \sum_{i=1}^N I(W;U_i|\tilde{S},U^{-i}),
\]
where $U^{-i} = U \setminus U_i$ and $U^{i} = (U_1, \ldots, U_i)$.
Nonetheless, the bound obtained in Proposition~\ref{prop:icmi} is the tightest, when the bounds are written in their mutual information form after applying Jensen's inequality, as dictated by the following lemma.

\begin{lemma}
In the randomized subsample setting, we have that $I(W;U_i|\tilde{Z}_i,\tilde{Z}_{i+N}) \leq I(W;U_i|\tilde{S})$ for all $i \in [N]$.
\end{lemma}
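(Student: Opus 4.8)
The plan is to mimic the chain-rule argument used in the preceding lemma, exploiting the defining independence of the randomized subsample setting: the Bernoulli selector $U_i$ is independent of the entire super-sample $\tilde{S}$. Write $\tilde{S}^{-i}$ for the super-sample with the $i$-th pair $(\tilde{Z}_i,\tilde{Z}_{i+N})$ removed, so that $\tilde{S} = (\tilde{Z}_i, \tilde{Z}_{i+N}, \tilde{S}^{-i})$. The first step is to record the key consequence of independence, namely $I(U_i; \tilde{S}^{-i} \mid \tilde{Z}_i, \tilde{Z}_{i+N}) = 0$, since $U_i \perp \tilde{S}$ forces $U_i$ to remain independent of $\tilde{S}^{-i}$ even after conditioning on $(\tilde{Z}_i,\tilde{Z}_{i+N})$.

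Next I would expand the joint conditional mutual information $I\big((W, \tilde{S}^{-i}); U_i \mid \tilde{Z}_i, \tilde{Z}_{i+N}\big)$ with the chain rule in both orders. Grouping $\tilde{S}^{-i}$ first yields $I(\tilde{S}^{-i}; U_i \mid \tilde{Z}_i, \tilde{Z}_{i+N}) + I(W; U_i \mid \tilde{S})$, whose first term vanishes by the previous step, leaving exactly $I(W; U_i \mid \tilde{S})$. Grouping $W$ first yields $I(W; U_i \mid \tilde{Z}_i,\tilde{Z}_{i+N}) + I(\tilde{S}^{-i}; U_i \mid W, \tilde{Z}_i, \tilde{Z}_{i+N})$. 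Equating the two expansions gives the identity $I(W;U_i\mid \tilde{S}) = I(W;U_i\mid \tilde{Z}_i,\tilde{Z}_{i+N}) + I(\tilde{S}^{-i}; U_i \mid W, \tilde{Z}_i, \tilde{Z}_{i+N})$, and the claim follows at once because the trailing interaction term is nonnegative.

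There is essentially no hard obstacle here; the only point requiring care is the conditional-independence claim $I(U_i; \tilde{S}^{-i} \mid \tilde{Z}_i, \tilde{Z}_{i+N}) = 0$, which one must derive from the joint independence $U_i \perp \tilde{S}$ rather than assume it directly. This parallels the preceding lemma (where the roles were played by $Z_i$ and $S^{-i}$, using $Z_i \perp S^{-i}$), but the present statement is in fact cleaner: because $U_i$ is independent of the whole super-sample, a single nonnegative term appears rather than the chain of terms seen before.
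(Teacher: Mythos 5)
Your proof is correct and takes essentially the same route as the paper's: both rest on the independence of $U_i$ and $\tilde{S}$, chain-rule expansions, and nonnegativity of (conditional) mutual information, and both drop exactly the same term $I(U_i;\tilde{S}^{-i}\mid W,\tilde{Z}_i,\tilde{Z}_{i+N})$. The only difference is bookkeeping: the paper first rewrites both sides as unconditional mutual informations, e.g., $I(W;U_i|\tilde{S}) = I(U_i;(W,\tilde{S}))$, before expanding, whereas you keep the conditioning on $(\tilde{Z}_i,\tilde{Z}_{i+N})$ throughout and expand $I\big((W,\tilde{S}^{-i});U_i\mid \tilde{Z}_i,\tilde{Z}_{i+N}\big)$ in two orders.
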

\begin{proof}
We may start the proof by noting that
\begin{equation}
    I(W;U_i|\tilde{S}) = I(U_i; (W,\tilde{S})) - I(U_i,\tilde{S}) = I(U_i;(W,\tilde{S})),
    \label{eq:single_index_bound_loose}
\end{equation}
where we used the small chain rule of the mutual information~\citep[Theorem~2.5]{polyanskiy2014lecture} and the fact that $U$ and $\tilde{S}$ are independent.
Similarly, we have that
\begin{equation}
    I(W;U_i|\tilde{Z}_{i}, \tilde{Z}_{i+N}) = I(U_i;(W,\tilde{Z}_{i}, \tilde{Z}_{i+N})).
    \label{eq:single_index_bound_tight}
\end{equation}
Then, we may operate with $I(W;U_i|\tilde{S})$ in the form of~\eqref{eq:single_index_bound_loose} to obtain that it is greater or equal than $I(W;U_i|\tilde{Z}_i,\tilde{Z}_{i+N})$ in the form of~\eqref{eq:single_index_bound_tight}, namely
\begin{align*}
    I(U_i;(W,\tilde{S})) &= I(U_i; (W,\tilde{Z}_i,\tilde{Z}_{i+N},\tilde{S}^{-i}) \\
    &= I(U_i;(W,\tilde{Z}_{i}, \tilde{Z}_{i+N})) + I(U_i;\tilde{S}^{-i}|W,\tilde{Z}_{i}, \tilde{Z}_{i+N}) \\
    &\geq I(U_i;(W,\tilde{Z}_{i}, \tilde{Z}_{i+N})),
\end{align*}
where $\tilde{S}^{-i} = \tilde{S} \setminus (\tilde{Z}_i,\tilde{Z}_{i+N})$ and we used again the small chain rule of mutual information and the fact that the mutual information is non-negative. 
\end{proof}

This way, we have that
\[
    \sum_{i=1}^N I(W;U_i|\tilde{Z}_i,\tilde{Z}_{i+N}) \leq \sum_{i=1}^N I(W;U_i|\tilde{S}) \leq I(W;U|\tilde{S}) = \sum_{i=1}^N I(W;U_i|\tilde{S},U^{i-1}) \leq \sum_{i=1}^N I(W;U_i|\tilde{S},U^{-i}).
\]
\label{sec:appendix}

\end{document}